\def\blfootnote{\xdef\@thefnmark{}\@footnotetext}
\DeclareMathAlphabet\mathbfcal{OMS}{cmsy}{b}{n}
\newtheorem{theorem}{Theorem}
\newtheorem{corollary}{Corollary}
\newtheorem*{proof of Theorem*}{Proof of Theorem 3}
\newtheorem{proof of Lemma}{Proof of Lemma}
\newtheorem{definition}{Definition}
\newtheorem{lemma}{Lemma}
\newcommand{\ignore}[1]{}
\begin{document}
\addtolength{\textheight}{0.8cm}
\vspace{-0.2 in}
%\title{Trackability of Discrete Random Processes Based on Side Information}

\title{On the Trackability of  Stochastic Processes}
%\author{\IEEEauthorblockN{B. Tan Bacinoglu}}
\author{\IEEEauthorblockN{Baran Tan Bacinoglu\IEEEauthorrefmark{1}, Yin Sun\IEEEauthorrefmark{3}, Elif Uysal\IEEEauthorrefmark{1}}
\IEEEauthorblockA{\IEEEauthorrefmark{1}METU, Ankara, Turkey,
\IEEEauthorrefmark{3}Auburn University, AL, USA,\\
 E-mail:  barantan@metu.edu.tr, yzs0078@auburn.edu, uelif@metu.edu.tr}
}

\bibliographystyle{IEEEtran}

\maketitle

\def\eg{\emph{e.g.}}
\def\ie{\emph{i.e.}}

\begin{abstract}
We consider the problem of tracking an unstable stochastic process $X_t$ by using causal knowledge of another stochastic process $Y_t$. We obtain necessary conditions and sufficient conditions for maintaining a finite tracking error. We provide necessary conditions as well as sufficient conditions for the success of this estimation, which is defined as order $m$ moment trackability. By-products of this study are connections between statistics such as R\'{e}nyi entropy, Gallager's reliability function, and the concept of anytime capacity.    
\end{abstract}
\begin{IEEEkeywords}tracking conditions; causal information; H\"{o}lder inequality; R\'{e}nyi entropy; Gallager's reliability function; Gartner-Ellis limit; anytime capacity; causal estimation\end{IEEEkeywords}
%\begin{IEEEkeywords}Age of Information; age-energy tradeoff; threshold policy; optimal threshold; energy harvesting; battery capacity\end{IEEEkeywords}
 
\section{Introduction}
\blfootnote{
This work has been supported in part by ONR grant N00014-17-1-2417, NSF grant CCF-1813050, TUBITAK Grant 
112E175 and Huawei.}
The tracking of unstable processes from noisy, delayed or infrequent samples is a fundamental problem that naturally arises in networked control systems. Non-stationary stochastic processes such as random walks and their scaling limits such as the Wiener process and the Ornstein -Uhlenbeck process are examples of unstable processes that are often used to model uncontrolled systems. Tracking of such processes may also arise in situations that do not require closed-loop control, and may have applications beyond networked control systems. For example, the Wiener process is used to model the physical diffusion process known as Brownian motion \cite{karatzasshrevefo1991}, option pricing in financial analysis \cite{karatzasshrevefo1998}, phase noise in communication channels \cite{21283} and forms a basis for analysis tools such as Feynman-Kac formula \cite{kac1979}. 

The  problem of communicating the state of unstable sources has been considered in \cite{Sahai2001AnytimeIT}, which introduced the notions of \emph{anytime reliability} and \emph{anytime capacity}. It was shown that anytime capacity provides the necessary and sufficient condition on the rate of an unstable scalar Markov source that can be tracked in the finite mean-squared error sense. Accordingly, it was claimed that anytime capacity, which is upper bounded by Shannon capacity, is the correct figure of merit to measure the quality of a channel on the purpose of tracking an unstable source and also controlling through an unreliable channel \cite{7500143}. However, while anytime capacity is known to be strictly positive for some channels, a closed-form expression for anytime capacity has not been shown as opposed to Shannon capacity which can be expressed as an optimization of mutual information. On the other hand, anytime capacity of particular channels such as erasure channels with feedback \cite{Simsek2004AnytimeIT} and Markov channels \cite{7500143} have been derived.     

Aside from the theory developed in \cite{Sahai2001AnytimeIT}, the problem of stabilizing a system with limited communication has been extensively studied from stochastic control \cite{BansalBasar1989, ImerYukselandBasar2006, 5290272, yuksel10,yuksel2012}, rate-distortion theory \cite{DBLP:conf/cdc/CharalambousKS09, 8395023, 8693967, 8437916, stavrou2019sequential} and joint source-channel coding \cite{8693975}  perspectives for linear systems and from the perspectives of metric and topological entropy for non-linear dynamical systems \cite{KawanYuksel2019}. In our study, the state estimation side of this problem is investigated from a perspective that is centered on 
a definition of reliable estimation which we refer as \emph{order $m$ moment trackability} in accordance with \emph{$m$-th moment stability}. Based on this definition, we study the estimation of integer-valued \footnote{Our results are for integer-valued sources, however, note that this is not restrictive for digital systems where data is represented using integers.} stochastic processes which may represent linear or non-linear discrete-time systems.

Our contributions are as follows:
\begin{itemize}
    \item We show two moment-entropy inequalities for integer-valued  random variables inspired from the inequality for the moments of guessing random variables in \cite{Arikan1996}. One of these bounds is for bounded integer-valued random variables (see Lemma \ref{momententropyforboundedintegers}) while the other (see Lemma \ref{momententropyforintegers}) is valid for integer-valued random variables that do not necessarily have finite support.
    \item We provide necessary conditions (see Theorem \ref{necessaryconditionfordiscrete} and Theorem \ref{necessaryconditionforintegers}) for tracking integer-valued sources using causal information. Corollaries of Theorem \ref{necessaryconditionfordiscrete}   are upper bounds on anytime capacity based on Gallager's reliability function and the Gartner-Ellis limit of the information density between channel inputs and outputs.
    \item We provide sufficient conditions   for tracking integer-valued sources using causal information in Theorem \ref{sufficientconditionfordiscrete} and Theorem \ref{sufficientconditionforrhoestimators} where the former is based on an upper bound for the estimation error of maximum a posteriori (MAP) estimators (Lemma \ref{MAPboundforgeneraldistance}) and the latter is based on estimators we suggest.
\end{itemize}

%\cite{TanISIT2020report}

\section{System Model} 

Consider the problem of tracking a scalar discrete-time  and discrete-valued stochastic process $\{X_t\}_{t=1,2,\ldots}$ based on causal knowledge of another stochastic process $\{Y_t\}_{t=1,2,\ldots}$. %Our results are obtained for the case that $X_t$ and $Y_t$ are  discrete random variables. 
At any time $t$, the estimator generates a guess $\hat X_t = f_t(Y_{1:t}) $ of the current value $X_t$, where $f_t(\cdot) $ is a function and $Y_{1:t}= (Y_1, Y_2,\ldots, Y_t)$ is the   information that is available at time $t$.

\iffalse{
\begin{definition}
For any $m>0$ and $\beta>0$, $\{X_t\}_{t=1,2,\ldots}$ is said to be {order $m$ moment trackable with $\beta$ guarantee} based on $\{Y_t\}_{t=1,2,\ldots}$ if there exists a family of  functions $f_t$, $t =1,2,\ldots$ such that $\hat X_t = f_t(Y_{1:t})$ and 
\begin{equation}
\sup_{t>0} \mathbb{E}\left[ \vert X_{t}-\hat{X}_{t} \vert^{m}\right] < \beta. 
\end{equation}
\end{definition}

\begin{definition}
For any $m>0$, $\{X_t\}_{t=1,2,\ldots}$ is said to be {order $m$ moment trackable} based on $\{Y_t\}_{t=1,2,\ldots}$ if there exists a $\beta <\infty$ such that $\{X_t\}_{t=1,2,\ldots}$ is {order $m$ moment trackable} based on $\{Y_t\}_{t=1,2,\ldots}$.
\end{definition}
}\fi

\begin{definition}
For any $m>0$, $\{X_t\}_{t=1,2,\ldots}$ is said to be {order $m$ moment trackable} based on $\{Y_t\}_{t=1,2,\ldots}$ if there exists a family of  functions $\{f_t(\cdot)\}_{t=1,2,...}$ such that $\hat X_t = f_t(Y_{1:t})$ and 
\begin{equation}
\sup_{t>0} \mathbb{E}\left[ \vert X_{t}-\hat{X}_{t} \vert^{m}\right] < \infty. 
\end{equation}
\end{definition}

%
%The first goal of the present paper is to find necessary conditions and sufficient conditions for the $m$-th moment trackability of process $\{X_t\}_{t=1,2,\ldots}$ based on the side information process $\{Y_t\}_{t=1,2,\ldots}$. In \cite{Sahai2001AnytimeIT}, the \emph{anytime capacity} of a memoryless noisy channel is defined as the minimum channel rate that is required for the $m$-moment trackability of an unstable random source $\{S_t\}_{t=1,2,\ldots}$ based on the channel output $\{Y_t\}_{t=1,2,\ldots}$, which is depicted in Fig. \ref{fig:systemdescriptioforencodingandestimation}. \emph{(Here, you can explain what is solved and what is open, feedback and without feedback. Like in Hassibi's student paper)}
The first goal of the present paper is to find necessary conditions and sufficient conditions for the $m$-th moment trackability of process $\{X_t\}_{t=1,2,\ldots}$ based on the side information process $\{Y_t\}_{t=1,2,\ldots}$. In \cite{Sahai2001AnytimeIT}, the \emph{anytime capacity} of a noisy channel was shown to be a necessary and sufficient quality measure of a channel to allow order $m$ moment trackability of a Markov source $\{S_t\}_{t=1,2,\ldots}$ based on the channel output $\{Y_t\}_{t=1,2,\ldots}$,
The second goal of the  paper is to find new bounds of the anytime capacity, based on the trackability results. 
 %which are more general than Theorem  in \cite{}. 

% With some technical efforts, the case of continuous-valued $X_t$ and $Y_t$ can be addressed similarly. 

%\begin{figure}
%\label{fig:systemdescriptioforencodingandestimation}
%	\centering
%	\includegraphics[width=2.7in]{encoderestimator.eps}
%	\caption{System description for encoding and estimation over a noisy channel.}
%\end{figure}

%With a slight abuse of notations, we will express $\{X_t\}_{t=1,2,\ldots}$ and $\{Y_t\}_{t=1,2,\ldots}$ as $X_t$ and $Y_t$, respectively, in the rest of the paper. 

%For the simplicity of our mathematical derivations, we assume that $X_t$ is a discrete random variable, whereas $Y_t$ can be either discrete or continuous.

\section{Main Results}
\ignore{

Consider a scalar discrete-time stochastic process $\lbrace X_{t}\rbrace_{t>0}$ (source) and let $\lbrace Y_{t}\rbrace_{t>0}$ (causal information) be another discrete-time stochastic process from which an estimation of $ X_{t}$ is obtained for $t>0$. We will consider the conditions for the trackability of $\lbrace X_{t} \rbrace$ through $\lbrace Y_{t} \rbrace$ where we define trackability as follows:
\begin{definition}
For $m$ being a real number, a process $\lbrace X_{t} \rbrace$ is said to be order $m$ moment trackable though process $\lbrace Y_{t} \rbrace$ if there exists a family of estimators $\lbrace \hat{X}_{t} \rbrace$ for $t > 0$ such that:
\begin{equation}
\sup_{t>0} \mathbb{E}\left[ \vert X_{t}-\hat{X}_{t} \vert^{m}\right] < \infty, 
\end{equation}
where $\hat{X}_{t}$ is a function of $Y_{1:t}$, i.e., $\hat{X}_{t}=\hat{X}_{t}(Y_{1:t})$ for $Y_{1:t}$ denotes $\lbrace Y_{1},Y_{2},\ldots,Y_{t}\rbrace$.
\end{definition}
Our definition of trackability is based on the definition of tracking in \cite{Sahai2001AnytimeIT} and is also consistent with finite-moment stability condition used in the literature. In our definition, the trackability depends on $\lbrace Y_{t} \rbrace$ from which the estimation $\hat{X}_{t}(Y_{1:t})$, that actually tracks $X_{t}$, is generated.
}
\subsection{Necessary Conditions for Trackability}
We  provide two necessary conditions for order $m$ moment trackability, which are expressed in terms of R\'{e}nyi entropy and  information density. 
%We first consider a necessary condition that an integer-valued stochastic process $\lbrace X_{t} \rbrace$ is order $m$ moment trackable though process $\lbrace Y_{t} \rbrace$. 
The R\'{e}nyi entropy of order $\alpha$, where $\alpha \geq 0$  and $\alpha \neq 1$, is defined as \cite{renyi1961}
\begin{align}
H_{\alpha}(X)&=\frac{1}{1-\alpha}\log\mathbb{E}\left[ P_{X}(X)^{\alpha-1}\right] \\
& = \frac{1}{1-\alpha}\log\left[\displaystyle\sum_{x\in \mathcal{X}} P_{X}(x)^{\alpha}\right].
\end{align}
Given joint distribution $P_{XY}$, the information density function is defined as \cite{pinsker1964information}
\begin{align}
i(x;y)= & \log\left[ \frac{P_{XY}(x,y)}{P_{X}(x)P_{Y}(y)}\right]. \label{eq_density_discrete}
\end{align}
%where \eqref{eq_density_general} is expressed in terms of Radon-Nikodym derivative and holds for both continuous and  discrete random variables, and \eqref{eq_density_discrete} only holds for discrete random variables. 
%for  is .
%
%
%Before deriving this necessary condition, we show a moment-entropy relation that is :

The first necessary condition that we present is as follows:

\begin{theorem}
\label{necessaryconditionfordiscrete}
If $\{X_t\}_{t=1,2,\ldots}$ is an integer-valued stochastic process that satisfies 
\begin{align}\label{eq_c}
&\vert X_{t} \vert \leq c_{t}, \\
&\lim_{t \rightarrow \infty}\frac{1}{t}\log(\log(c_{t})) =0,\label{eq_logc}
\end{align}
then $\{X_t\}_{t=1,2,\ldots}$ is order $m$ moment trackable based on $\lbrace Y_{t} \rbrace_{t=1,2,\ldots}$, where $Y_{t} \in \mathcal{Y}$ and $\vert \mathcal{Y} \vert < \infty$, only if the following inequality holds, for all $\rho\in(0,m]$ and $q>\rho+1$, 
\begin{align}\label{eq_necessaryconditionfordiscrete}
&\liminf_{t \rightarrow \infty}-\frac{1}{\rho t}\log\mathbb{E}\left[\mathbb{E}\left[  e^{-\frac{\rho}{q}i(X_{t};Y_{1:t})} \Big| Y_{1:t}\right]^{q}\right]  \nonumber\\
\geq&\limsup_{t \rightarrow \infty} \frac{1}{t}H_{\frac{q-1}{q-\rho-1}}(X_{t}).
\end{align}
\end{theorem}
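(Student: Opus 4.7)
The plan is to combine Lemma~\ref{momententropyforboundedintegers} (a conditional moment-entropy inequality in the spirit of Arikan's guessing bound) with Hölder's inequality, and then pass to asymptotics. Assume $\{X_t\}$ is order $m$ moment trackable, so there exist estimators $\hat X_t = f_t(Y_{1:t})$ with $\sup_{t>0} \mathbb{E}[|X_t-\hat X_t|^m]<\infty$. Fix $\rho\in(0,m]$ and $q>\rho+1$; by Lyapunov's inequality, $\sup_t \mathbb{E}[|X_t-\hat X_t|^{\rho}]\leq B_{\rho}<\infty$. Conditioning on $Y_{1:t}=y$, the conditional law of $X_t$ is supported in $\{-c_t,\ldots,c_t\}$ and $\hat X_t=f_t(y)$ is a constant, so Lemma~\ref{momententropyforboundedintegers} applied to this conditional distribution yields (schematically)
\[
\mathbb{E}\!\left[|X_t-\hat X_t|^{\rho}\,\big|\,Y_{1:t}=y\right] \ \geq\ \frac{\bigl(\sum_{x} P_{X_t\mid Y_{1:t}}(x\mid y)^{1/(1+\rho)}\bigr)^{1+\rho}}{(1+\log(2c_t+1))^{\rho}}.
\]
Taking expectation over $Y_{1:t}$ and using the uniform moment bound,
\[
\mathbb{E}_{Y_{1:t}}\!\left[\bigl(\textstyle\sum_{x} P_{X_t\mid Y_{1:t}}(x\mid Y_{1:t})^{1/(1+\rho)}\bigr)^{1+\rho}\right]\ \leq\ B_{\rho}(1+\log(2c_t+1))^{\rho}\ =\ e^{o(t)},
\]
where the last equality uses the growth condition $\lim_t \tfrac{1}{t}\log\log c_t = 0$.

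Next I would convert this Arikan-form bound into the statement of \eqref{eq_necessaryconditionfordiscrete} via Hölder's inequality. The idea is to factor the summand $P(x\mid y)^{1/(1+\rho)}$ as $\bigl[P(x\mid y)^{1-\rho/q}P(x)^{\rho/q}\bigr]^{\lambda}\,P(x)^{-\lambda\rho/q}$ with $\lambda=q/[(q-\rho)(1+\rho)]$, apply Hölder in the inner sum over $x$ with conjugate exponents matched so that the first block sums to $\mathbb{E}[e^{-(\rho/q)i(X_t;Y_{1:t})}\mid Y_{1:t}=y]$ and the residual marginal factor produces the Rényi quantity $\sum_{x} P_{X_t}(x)^{(q-1)/(q-\rho-1)}$, and then apply Hölder once more in the outer expectation over $Y_{1:t}$ to aggregate. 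The hypothesis $q>\rho+1$ is what ensures the conjugate exponents lie in the admissible Hölder range; at the threshold $q=\rho+1$ the Rényi order $(q-1)/(q-\rho-1)$ diverges and the inequality degenerates. Carrying this through yields, for all $t$,
\[
\mathbb{E}\!\left[\mathbb{E}[e^{-\tfrac{\rho}{q}i(X_t;Y_{1:t})}\mid Y_{1:t}]^{q}\right]\ \leq\ e^{o(t)}\cdot\bigl(\textstyle\sum_{x} P_{X_t}(x)^{(q-1)/(q-\rho-1)}\bigr)^{q-\rho-1}.
\]
Using the identity $-(q-\rho-1)\log\sum_{x}P_{X_t}(x)^{(q-1)/(q-\rho-1)}=\rho H_{(q-1)/(q-\rho-1)}(X_t)$, dividing the log of this inequality by $-\rho t$, and passing to the $\liminf$ on the left and $\limsup$ on the right, gives \eqref{eq_necessaryconditionfordiscrete}.

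The main obstacle is the Hölder manipulation in the second step: the conjugate exponents must be simultaneously tuned to produce both the information-density form $\mathbb{E}[e^{-(\rho/q)i}\mid Y]^{q}$ on the left and the specific Rényi order $(q-1)/(q-\rho-1)$ on the right. Verifying the admissibility of these conjugate exponents is exactly where $q>\rho+1$ is exploited, and a naive single application of Hölder produces the wrong Rényi exponent, so the parametrization must be chosen with care (likely requiring a two-step application, once in $x$ and once in $y$). A secondary difficulty is propagating the polylogarithmic prefactor from Lemma~\ref{momententropyforboundedintegers} through the Hölder step while keeping it $e^{o(t)}$; this is precisely what the growth condition $\log\log c_t = o(t)$ is calibrated to absorb.
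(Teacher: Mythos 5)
Your overall architecture matches the paper's: you apply Lemma~\ref{momententropyforboundedintegers} to the conditional distribution of the error given $Y_{1:t}$, split the resulting conditional R\'enyi quantity into an information-density piece and a marginal R\'enyi piece via a H\"older-type inequality, and then pass to $\liminf$/$\limsup$ after killing the $(\log c_t)$-prefactor with the growth condition \eqref{eq_logc}. That is exactly the paper's plan. However there are two genuine gaps, one of which you openly flag yourself.

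First, the reduction to integer-valued estimators is missing, and it is needed before Lemma~\ref{momententropyforboundedintegers} can be invoked. The trackability definition only requires $\hat X_t = f_t(Y_{1:t})$ to be a measurable function; it need not be integer-valued, so $X_t - \hat X_t$ (equivalently $X_t$ shifted by the constant $f_t(y)$) is not integer-valued, and Lemma~\ref{momententropyforboundedintegers} simply does not apply to it. The paper first replaces $\hat X_t$ by $\hat X_t^{(c)} = \lceil \hat X_t \rceil$, proves via Minkowski's inequality (for $m>1$) and subadditivity of $u\mapsto u^m$ (for $m\le 1$) that $\sup_t \mathbb{E}[|X_t-\hat X_t^{(c)}|^m]<\infty$ whenever $\sup_t\mathbb{E}[|X_t-\hat X_t|^m]<\infty$, and only then works with the integer-valued error $E_t=X_t-\hat X_t^{(c)}\in\{-2c_t,\dots,2c_t\}$. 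You should insert this step; without it the conditional application of the lemma is unjustified.

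Second, the H\"older step is left as an ``obstacle'' rather than proved, and the direction you indicate is wrong. To exploit the moment bound you need a \emph{lower} bound on $\bigl(\sum_x P_{X_t\mid Y_{1:t}}(x\mid y)^{1/(1+\rho)}\bigr)^{1+\rho}$ in terms of $\mathbb{E}\bigl[e^{-(\rho/q)i(X_t;Y_{1:t})}\mid Y_{1:t}=y\bigr]^{q}$ and a marginal R\'enyi factor; ordinary H\"older applied in the inner sum yields an \emph{upper} bound and cannot deliver this. The paper uses the \emph{reverse} H\"older inequality: writing $P_{X\mid Y}^{-\rho/(\rho+1)}=e^{-\frac{\rho}{\rho+1}i(X;Y)}\cdot P_X^{-\rho/(\rho+1)}$ and applying, for $p\in(1,\infty)$,
\begin{equation}
\mathbb{E}[fg\mid Y]\;\geq\;\mathbb{E}\bigl[f^{1/p}\mid Y\bigr]^{p}\,\mathbb{E}\bigl[g^{-1/(p-1)}\mid Y\bigr]^{-(p-1)},
\end{equation}
with $f=e^{-\frac{\rho}{\rho+1}i}$ and $g=P_X^{-\rho/(\rho+1)}$, which after raising to the power $\rho+1$ and setting $q=p(\rho+1)$ produces exactly the exponents $q$ and $\alpha=(q-1)/(q-\rho-1)$ in \eqref{eq_necessaryconditionfordiscrete}. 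Your intuition that the constraint $q>\rho+1$ comes from the admissibility of the H\"older parametrization is correct (it is $p>1$), and your intuition that a single naive H\"older gives the wrong R\'enyi exponent is also correct; what is missing is the observation that the reverse (not the standard) H\"older inequality is the right tool and already produces the stated exponent in one stroke. As written, the proposal identifies the correct destination but does not arrive at it.
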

\begin{proof}
See Appendix \ref{necessaryconditionfordiscrete:proof}.
\end{proof}

The proof of Theorem \ref{necessaryconditionfordiscrete} uses the following moment-entropy inequality for the R\'{e}nyi entropy, which is inspired by Theorem 1 in \cite{Arikan1996}.
\begin{lemma}
\label{momententropyforboundedintegers}
If $X$ is an integer-valued random variable taking values from the set $\mathcal{X}=\lbrace -M_{-}, \ldots,-1,0,1,\ldots, M_{+}\rbrace$ where $M_{-}$ and $M_{+}$ are positive integers, then
for all $\rho \geq 0$ 
\begin{eqnarray}
\label{absmomentandRenyi}
&&\mathbb{E}[\vert X \vert^{\rho}]+ 1 \geq \left[ 3+\log(M_{-}M_{+})\right]^{-\rho}e^{\rho H_{\frac{1}{1+\rho}}(X)}.
\end{eqnarray}

\end{lemma}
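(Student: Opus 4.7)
My plan is to adapt Ar\i kan's guessing-moment argument (Theorem~1 of \cite{Arikan1996}), which bounds a moment of an arbitrary guessing function $G$ by the R\'{e}nyi entropy via H\"older's inequality applied to the identity $P_X(x)^{1/(1+\rho)} = [P_X(x)\,G(x)^\rho]^{1/(1+\rho)}\,G(x)^{-\rho/(1+\rho)}$. The function $|x|$ cannot serve as $G$ directly because it vanishes at $x=0$; my key modification is to replace it by the truncated surrogate $g(x) = \max(|x|,1)^\rho$, which stays bounded away from zero and is still controlled pointwise by $|x|^\rho+1$.

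With this choice, applying H\"older with conjugate exponents $1+\rho$ and $(1+\rho)/\rho$ to $\sum_x [P_X(x)\,g(x)]^{1/(1+\rho)}\,g(x)^{-1/(1+\rho)}$ yields
\begin{equation*}
\sum_{x\in\mathcal{X}} P_X(x)^{1/(1+\rho)} \leq \mathbb{E}\bigl[\max(|X|,1)^\rho\bigr]^{1/(1+\rho)} \Bigl(\sum_{x\in\mathcal{X}} \max(|x|,1)^{-1}\Bigr)^{\rho/(1+\rho)}.
\end{equation*}
The harmonic sum on the right splits as $1 + H_{M_+} + H_{M_-}$, with the $1$ coming from the $x=0$ term, and the standard estimate $H_M \leq 1 + \log M$ then gives $\sum_x \max(|x|,1)^{-1} \leq 3 + \log(M_- M_+)$, producing precisely the constant that appears in the lemma.

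To finish, I would raise the H\"older inequality to the $(1+\rho)$-th power, invoke the identity $e^{\rho H_{1/(1+\rho)}(X)} = \bigl(\sum_x P_X(x)^{1/(1+\rho)}\bigr)^{1+\rho}$ coming from the definition of R\'{e}nyi entropy, and replace $\mathbb{E}[\max(|X|,1)^\rho]$ by $\mathbb{E}[|X|^\rho]+1$. The latter holds pointwise on the integers: if $|X|\geq 1$ then $\max(|X|,1)^\rho = |X|^\rho$, while if $|X|=0$ then $\max(|X|,1)^\rho = 1$; either way $\max(|X|,1)^\rho \leq |X|^\rho+1$. Rearranging then delivers the stated inequality.

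There is no deep obstacle here, but the choice of surrogate $g$ is the one real design decision. The naive $g(x)=|x|^\rho$ fails at $x=0$ both for H\"older (the reciprocal blows up) and for the harmonic sum. The alternative $g(x)=(|x|+1)^\rho$ gives a clean harmonic bound, but would force the closing step to use $(a+1)^\rho \leq a^\rho+1$, which only holds for $\rho \leq 1$. Truncating via $\max(|x|,1)$ is the simplest choice that works uniformly for all $\rho \geq 0$ and produces exactly the constant $3 + \log(M_- M_+)$ from the two harmonic sums plus the $x=0$ term.
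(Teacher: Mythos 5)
Your proof is correct and is essentially the same argument as the paper's: the paper defines a surrogate $A(x)$ equal to $|x|$ for $x\neq 0$ and to a free parameter $\epsilon$ at $x=0$, applies the reverse H\"older inequality to $\sum_x P_X(x)A(x)^\rho$ with exponent $p=1+\rho$, bounds $\sum_x A(x)^{-1}\leq 2+1/\epsilon+\log(M_-M_+)$, and then sets $\epsilon=1$. Setting $\epsilon=1$ makes the paper's $A(x)^\rho$ identically equal to your $g(x)=\max(|x|,1)^\rho$, and the reverse H\"older step the paper uses is just the algebraic rearrangement of the forward H\"older step you use on $\sum_x P_X(x)^{1/(1+\rho)}$; your closing bound $\mathbb{E}[\max(|X|,1)^\rho]\leq\mathbb{E}[|X|^\rho]+1$ is the same as the paper's $\mathbb{E}[|X|^\rho]+\epsilon^\rho P_X(0)$ followed by $P_X(0)\leq 1$. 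The only cosmetic difference is that you fix the value at $x=0$ from the outset rather than carrying it as a parameter, which as you note does not cost anything here.
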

\begin{proof}
See Appendix \ref{momententropyforboundedintegers:proof}.
\end{proof}
\ignore{
Observe that, the left hand side (LHS) of (\ref{necessaryconditionforGallagerRenyi}) can be written in terms of \emph{information density} function which is defined as follows:
\begin{definition}
Given joint distribution $P_{XY}$ for discrete random variables $X$ and $Y$,  the information density function is defined as follows:
\begin{equation}
i_{P_{XY}}(x;y):= \log\left( \frac{P_{XY}(x,y)}{P_{X}(x)P_{Y}(y)}\right)=\log\left( \frac{P_{X\vert Y}(x\vert y)}{P_{X}(x)}\right). 
\end{equation}
\end{definition}
In order to simply the notation, we drop the subscript $P_{XY}$ and denote $i_{P_{XY}}(\cdot;\cdot)$ as $i(\cdot;\cdot)$ assuming that the associated joint density can be inferred from the context. Accordingly, in case $Y_{1:t}$ is a discrete random variable, one can express (\ref{necessaryconditionforGallagerRenyi}) as follows:
\begin{eqnarray}
&&\liminf_{t \rightarrow \infty}-\frac{1}{mt}\log\mathbb{E}\left[\mathbb{E}\left[  e^{-\frac{m}{p(m+1)}i(X_{t};Y_{1:t})} \mid Y_{1:t}\right]^{p(m+1)}\right]  \nonumber\\
&&\geq\limsup_{t \rightarrow \infty} \frac{1}{t}H_{\frac{p(m+1)-1}{(p-1)(m+1)}}(X_{t}),
\end{eqnarray}
where $i(X_{t};Y_{1:t})$ is the information density random variable for joint density $P_{X_{t}Y_{1:t}}$.
}
Lemma  \ref{momententropyforboundedintegers} requires that $M_{-}$ and $M_{+}$ are finite. As a result, Theorem \ref{necessaryconditionfordiscrete} only applies to stochastic processes that satisfy \eqref{eq_c} and \eqref{eq_logc}. Next, we will provide a necessary condition for the  trackability of unbounded stochastic processes in Theorem \ref{necessaryconditionforintegers}, which is based on the following moment-entropy inequality.
\begin{lemma}
\label{momententropyforintegers}
If $X$ is an integer-valued random variable, then for all $\rho \in (0, m) $
%For $X$ being an integer-valued random variable and arbitrary real numbers $0 < \rho < m $, $\epsilon > 0$:
\begin{align}
\label{absmomentandRenyiforintegers}
\mathbb{E}[\vert X \vert^{m}]  + 1 \geq \left[1+2\zeta\left(\frac{m}{\rho}\right)\right]^{-\rho}e^{\rho  H_{\frac{1}{1+\rho}}(X)},
\end{align}
where $\zeta(\cdot)$ is the Riemann zeta function
\begin{align}
\zeta(s) = \sum_{n=1}^\infty \frac{1}{n^s}.
\end{align}
\end{lemma}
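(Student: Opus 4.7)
The plan is to mimic the structure of Ar\i{}kan's moment-guessing inequality (Theorem~1 of \cite{Arikan1996}) that inspired Lemma~\ref{momententropyforboundedintegers}, but with a weight function tailored to integers rather than to a bounded alphabet, so that a Riemann-zeta tail replaces the logarithmic factor $3+\log(M_-M_+)$.

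First I would rewrite the right-hand side of \eqref{absmomentandRenyiforintegers} in the more workable form
\begin{align*}
e^{\rho H_{1/(1+\rho)}(X)} = \left(\sum_{x\in\mathbb{Z}} P_X(x)^{1/(1+\rho)}\right)^{1+\rho},
\end{align*}
which follows directly from the definition of R\'enyi entropy. Then, introducing a positive weight $a(x)$ to be fixed later, I would write
\begin{align*}
\sum_{x\in\mathbb{Z}} P_X(x)^{1/(1+\rho)} = \sum_{x\in\mathbb{Z}} \bigl[P_X(x)\,a(x)\bigr]^{1/(1+\rho)} \cdot a(x)^{-1/(1+\rho)}
\end{align*}
and apply H\"older's inequality with conjugate exponents $1+\rho$ and $(1+\rho)/\rho$ to obtain
\begin{align*}
\left(\sum_{x\in\mathbb{Z}} P_X(x)^{1/(1+\rho)}\right)^{1+\rho} \leq \left(\sum_{x\in\mathbb{Z}} P_X(x)\,a(x)\right)\left(\sum_{x\in\mathbb{Z}} a(x)^{-1/\rho}\right)^{\rho}.
\end{align*}

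The choice that makes the first factor exactly $\mathbb{E}[|X|^m]+1$ is $a(x)=|x|^m+1$. It remains to control the second factor. Splitting the sum into the $x=0$ contribution (which equals $1$) and the $x\neq 0$ contribution, I would use the elementary bound $(|x|^m+1)^{-1/\rho}\leq |x|^{-m/\rho}$ for $x\neq 0$, giving
\begin{align*}
\sum_{x\in\mathbb{Z}} (|x|^m+1)^{-1/\rho} \leq 1 + 2\sum_{n=1}^{\infty} n^{-m/\rho} = 1+2\zeta(m/\rho),
\end{align*}
where convergence of the tail sum is exactly where the hypothesis $\rho<m$ enters, since it ensures $m/\rho>1$. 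Combining these three ingredients yields \eqref{absmomentandRenyiforintegers}.

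The only nontrivial step is recognizing the right weight $a(x)$ and the right H\"older split, but this is a fairly mechanical adaptation of the argument underlying Lemma~\ref{momententropyforboundedintegers}; once the split is in place, bounding the weight sum by the Riemann zeta function is routine and the restriction $\rho\in(0,m)$ emerges naturally from the convergence condition. No other real obstacle arises.
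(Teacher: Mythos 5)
Your proof is correct and, up to presentation, follows the same route as the paper: a H\"older-type inequality with a weight tailored to $\mathbb{Z}$ so that the tail of the weight sum is controlled by $\zeta(m/\rho)$, with the constraint $\rho<m$ entering exactly through convergence of that tail. The only differences are cosmetic: the paper applies the \emph{reverse} H\"older inequality to $\sum_x P_X(x)A(x)^m$ with the $\epsilon$-regularized weight $A(x)$ from \eqref{epsilonA} (taking $\epsilon=1$ and then using $P_X(0)\le 1$), while you apply the \emph{forward} H\"older inequality to $\sum_x P_X(x)^{1/(1+\rho)}$ with the weight $a(x)=|x|^m+1$, which yields $\mathbb{E}[|X|^m]+1$ directly and avoids the $\epsilon$-regularization step — a marginally cleaner but equivalent presentation of the same argument.
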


\begin{proof}
See Appendix \ref{momententropyforintegers:proof}.
\end{proof}
\ignore{
Using Lemma \ref{momententropyforintegers}, we get another necessary condition for order $m$ moment trackability, which does not require $X_t$ to be bounded. 
}
\begin{theorem}
\label{necessaryconditionforintegers}
An integer-valued stochastic process  $\{X_t\}_{t=1,2,\ldots}$ is order $m$ moment trackable based on $\lbrace Y_{t} \rbrace_{t=1,2,\ldots}$ , where $Y_{t} \in \mathcal{Y}$ and $\vert \mathcal{Y} \vert < \infty$, only if \eqref{eq_necessaryconditionfordiscrete} holds  for all $\rho\in(0,m)$ and $q>\rho+1$. 
%\begin{align}\label{eq_necessaryconditionforintegers}
%&\liminf_{t \rightarrow \infty}-\frac{1}{\rho t}\log\mathbb{E}\left[\mathbb{E}\left[  e^{-\frac{\rho}{p}i(X_{t};Y_{1:t})} \Big| Y_{1:t}\right]^{p}\right]  \nonumber\\
%\geq&\limsup_{t \rightarrow \infty} \frac{1}{t}H_{\frac{p-1}{p-\rho-1}}(X_{t}).
%\end{align}
%For $m$ being a non-negative real number, an integer-valued stochastic $\lbrace X_{t} \rbrace$ 
%is order $m$ moment trackable though  a discrete-valued process $\lbrace Y_{t} \rbrace $
%only if the following holds for any $0 < \rho < m$:
%\begin{eqnarray}
%\label{necessaryconditionforGallagerRenyimrho}
%&\liminf_{t \rightarrow \infty}\frac{-1}{\rho t}\log\mathbb{E}\left[\mathbb{E}\left[  e^{-\frac{\rho}{p(\rho+1)}i(X_{t};Y_{1:t})} \mid Y_{1:t}\right]^{p(\rho+1)}\right] \geq \nonumber\\
%&\limsup_{t \rightarrow \infty} \frac{1}{t}H_{\frac{p(\rho+1)-1}{(p-1)(\rho+1)}}(X_{t}),
%\end{eqnarray}
%where $i(X_{t};Y_{1:t})$ is the information density random variable for joint density $P_{X_{t}Y_{1:t}}$ and $p$ is a real number in $(1,\infty)$.
\end{theorem}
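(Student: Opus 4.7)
The plan is to mirror the proof of Theorem~\ref{necessaryconditionfordiscrete}, invoking Lemma~\ref{momententropyforintegers} in place of Lemma~\ref{momententropyforboundedintegers}. Because Lemma~\ref{momententropyforintegers} does not require the random variable to be bounded, the boundedness hypothesis \eqref{eq_c}--\eqref{eq_logc} can be dropped; the price is that $\rho$ must lie in the open interval $(0,m)$, since $\zeta(m/\rho)$ converges only for $m/\rho > 1$. Starting from order $m$ moment trackability, I would choose $\hat X_t = f_t(Y_{1:t})$ with $K := \sup_t \mathbb{E}[|X_t - \hat X_t|^m] < \infty$, where without loss of generality $\hat X_t \in \mathbb{Z}$ (rounding an estimator of an integer-valued target to the nearest integer cannot enlarge the absolute error). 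Then $X_t - \hat X_t$ is integer-valued; moreover, conditional on $Y_{1:t}=y$ the estimator $\hat X_t$ is a constant, so the conditional law of $X_t - \hat X_t$ is a shift of that of $X_t$, and shift-invariance of R\'enyi entropy gives $H_\alpha(X_t - \hat X_t \mid Y_{1:t}=y) = H_\alpha(X_t \mid Y_{1:t}=y)$ for every order $\alpha$.

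For any $\rho \in (0, m)$, I would next apply Lemma~\ref{momententropyforintegers} to $X_t - \hat X_t$ conditionally on $Y_{1:t}=y$, take expectation over $Y_{1:t}$, and use the trackability bound to obtain the uniform-in-$t$ estimate
\[
\mathbb{E}\bigl[e^{\rho H_{1/(1+\rho)}(X_t \mid Y_{1:t})}\bigr] \leq [1 + 2\zeta(m/\rho)]^{\rho}(K+1).
\]
I would then follow the proof of Theorem~\ref{necessaryconditionfordiscrete} verbatim: H\"older's inequality applied to the inner $x$-sum in the definition of $\mathbb{E}[e^{-(\rho/q) i(X_t;Y_{1:t})} \mid Y_{1:t}]$ yields the H\"older-type inequality relating $\mathbb{E}[\mathbb{E}[e^{-(\rho/q) i(X_t;Y_{1:t})}\mid Y_{1:t}]^{q}]$, the unconditional R\'enyi entropy $H_{(q-1)/(q-\rho-1)}(X_t)$, and the conditional quantity $\mathbb{E}[e^{\rho H_{1/(1+\rho)}(X_t \mid Y_{1:t})}]$ controlled in the previous display. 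Taking logarithms, dividing by $\rho t$, and letting $t \to \infty$ then yield \eqref{eq_necessaryconditionfordiscrete}.

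The main obstacle, inherited from the proof of Theorem~\ref{necessaryconditionfordiscrete}, is the H\"older step that couples the conditional R\'enyi entropy of order $1/(1+\rho)$ to the unconditional R\'enyi entropy of the specific order $(q-1)/(q-\rho-1)$ and the information-density functional: one must pick conjugate exponents that exactly reproduce these Gallager-type terms, and the constraint $q > \rho + 1$ is precisely what keeps the exponents in the admissible range. In contrast, the only genuinely new ingredient beyond the bounded case is Lemma~\ref{momententropyforintegers}, whose prefactor $[1+2\zeta(m/\rho)]^{-\rho}$ is a $t$-independent constant, so that no growth condition on $X_t$ analogous to \eqref{eq_logc} is required to absorb it into an $o(1)$ error term.
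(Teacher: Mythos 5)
Your proposal matches the paper's approach exactly: the paper's proof of Theorem~\ref{necessaryconditionforintegers} is simply to run the proof of Theorem~\ref{necessaryconditionfordiscrete} with Lemma~\ref{momententropyforintegers} substituted for Lemma~\ref{momententropyforboundedintegers}, observing that $\zeta(m/\rho)<\infty$ for $\rho\in(0,m)$ and that the prefactor is $t$-independent, so the growth condition \eqref{eq_logc} is unneeded. One small imprecision: your parenthetical justification for the integer-valued WLOG is false as stated (rounding a real estimate to the nearest integer \emph{can} increase $|X_t-\hat X_t|$, e.g.\ $X_t=0$, $\hat X_t=0.6$); the correct argument, as in the paper, is that discretizing the estimator increases the error by at most $1$, which by Minkowski (for $m>1$) or subadditivity of $u\mapsto u^m$ (for $m\le 1$) still preserves finiteness of $\sup_t\mathbb{E}[|X_t-\hat X_t|^m]$.
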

\begin{proof}
The proof is identical to the proof of Theorem \ref{necessaryconditionfordiscrete}, except that it uses Lemma \ref{momententropyforintegers} instead of Lemma \ref{momententropyforboundedintegers}. Note that $\zeta(\frac{m}{\rho})$ is finite for all $\rho\in(0,m)$.
\end{proof}
Theorem \ref{necessaryconditionforintegers} requires a weaker condition than Theorem \ref{necessaryconditionfordiscrete}. Accordingly,  the result of Theorem \ref{necessaryconditionforintegers} is weaker than that of Theorem \ref{necessaryconditionfordiscrete}. For this, notice that $\rho=m$ is not allowed in Theorem \ref{necessaryconditionforintegers}.

\subsection{Upper Bounds of Anytime Capacity}
Now, we show that (\ref{eq_necessaryconditionfordiscrete}) implies two inequalities that provide upper bounds on anytime capacity. First one can be expressed in terms of Gallager's reliability function which is defined as \cite{Gallager1965}
\begin{align}
&E_{0}(\rho,P_{Y\vert X}, P_{X}) \nonumber\\
=&-\log \sum_{y\in \mathcal{Y}} \left( \sum_{x\in \mathcal{X}}P_{X}(x)[P_{Y\vert X}(y \vert x)]^{\frac{1}{1+\rho}}\right)^{1+\rho}.
\end{align}
In \cite{5707067}, an alternative expression for Gallager's reliability function was used as follows
\begin{align}
\label{Gallagerfuntionwithinformationdensity}
&E_{0}(\rho,P_{Y\vert X}, P_{X}) =-\log\mathbb{E}\left[\mathbb{E}\left[  e^{-\frac{1}{1+\rho}i(\bar{X};Y)} \Big| Y\right]^{1+\rho}\right],
\end{align}
where $P_{X Y\bar{X}}(x,y,\bar{x})=P_{X}(x)P_{Y\vert X}(y\vert x)P_{X}(\bar{x})$ is the joint density for  $X$, $Y$ and $\bar{X}$.

In this paper, we find the following expression of Gallager’s reliability
function convenient, due to its connection with the LHS of (\ref{eq_necessaryconditionfordiscrete}). 
\begin{align}
\label{Gallagerfuntionwithinformationdensity2}
&E_{0}(\rho,P_{Y\vert X}, P_{X}) =-\log\mathbb{E}\left[\mathbb{E}\left[  e^{-\frac{\rho}{1+\rho}i(X;Y)} \Big| Y\right]^{1+\rho}\right].
\end{align}
Using (\ref{Gallagerfuntionwithinformationdensity2}), one can observe that the LHS of (\ref{eq_necessaryconditionfordiscrete}) becomes the Gallager's reliability function as $q$ reduces to $\rho+1$. Based on this observation, we derive the following corollary of Theorem \ref{necessaryconditionfordiscrete}:
\begin{corollary}
\label{necessaryconditionfordiscretecorollary2}
\label{necessaryconditionfordiscrete2}
Suppose that $S_{t}\rightarrow X_{1:t} \rightarrow Y_{1:t}$ is  a Markov chain for each $t$. If $\{S_t\}_{t=1,2,\ldots}$ is an integer-valued stochastic process that satisfies 
\begin{align}\label{eq_c1}
&\vert S_{t} \vert \leq c_{t}, \\
&\lim_{t \rightarrow \infty}\frac{1}{t}\log(\log(c_{t})) =0,\label{eq_logc2}
\end{align}
then $\{S_t\}_{t=1,2,\ldots}$ is order $m$ moment trackable based on $\lbrace Y_{t} \rbrace_{t=1,2,\ldots}$, where $Y_{t} \in \mathcal{Y}$ and $\vert \mathcal{Y} \vert < \infty$, only if
\begin{equation}
\label{necessaryconditionforGallagerRenyicorollary2}
\liminf_{t \rightarrow \infty}\frac{1}{mt}E_{0}(m,P_{Y_{1:t}\vert X_{1:t}}, P_{X_{1:t}})\! \geq \limsup_{t \rightarrow \infty} \frac{1}{t}H_{\infty}(S_{t}).\!\!\!
\end{equation}

\end{corollary}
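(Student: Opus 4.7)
The plan is to derive this corollary as a direct consequence of Theorem \ref{necessaryconditionfordiscrete} applied to $\{S_t\}$, combined with a data processing inequality for Gallager's $E_0$ across the Markov chain $S_t\to X_{1:t}\to Y_{1:t}$. Since $\{S_t\}$ is integer-valued and satisfies the growth condition \eqref{eq_c1}--\eqref{eq_logc2}, Theorem \ref{necessaryconditionfordiscrete} applies verbatim with $X_t$ replaced by $S_t$, giving \eqref{eq_necessaryconditionfordiscrete} (with $S_t$ in place of $X_t$) for every $\rho\in(0,m]$ and every $q>\rho+1$.

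I would then specialize to $\rho=m$ and let $q\downarrow m+1$. On the right-hand side, since R\'enyi entropy $H_\alpha$ is non-increasing in $\alpha$ and $(q-1)/(q-m-1)\to\infty$ as $q\to(m+1)^+$, one has $H_{(q-1)/(q-m-1)}(S_t)\geq H_\infty(S_t)$ uniformly in $q>m+1$, so that the right-hand side of \eqref{eq_necessaryconditionfordiscrete} is bounded below by $\limsup_t \frac{1}{t}H_\infty(S_t)$. On the left-hand side, the representation \eqref{Gallagerfuntionwithinformationdensity2} identifies the inner expression at $q=m+1$ with $E_0(m,P_{Y_{1:t}|S_t},P_{S_t})$; since $S_t$ has finite support by \eqref{eq_c1} and $|\mathcal Y|<\infty$, this inner expression is continuous in $q$ for each fixed $t$. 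Passing the $q\to(m+1)^+$ limit under the $\liminf_t$ would then yield
\[
\liminf_{t\to\infty}\tfrac{1}{mt}E_0(m,P_{Y_{1:t}|S_t},P_{S_t})\geq \limsup_{t\to\infty}\tfrac{1}{t}H_\infty(S_t).
\]

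The final step would replace $S_t$ by $X_{1:t}$ on the left-hand side using the Markov structure. Writing the marginalization $P_{Y_{1:t}|S_t}(y|s)=\sum_x P_{Y_{1:t}|X_{1:t}}(y|x)P_{X_{1:t}|S_t}(x|s)$ and applying Jensen's inequality to the concave map $z\mapsto z^{1/(1+m)}$ gives the pointwise bound $P_{Y_{1:t}|S_t}(y|s)^{1/(1+m)}\geq \sum_x P_{X_{1:t}|S_t}(x|s)P_{Y_{1:t}|X_{1:t}}(y|x)^{1/(1+m)}$. Averaging over $s$ against $P_{S_t}$, raising to the $(1+m)$-th power, summing over $y$, and taking $-\log$ yields the data processing inequality $E_0(m,P_{Y_{1:t}|S_t},P_{S_t})\leq E_0(m,P_{Y_{1:t}|X_{1:t}},P_{X_{1:t}})$, which combined with the previous display proves \eqref{necessaryconditionforGallagerRenyicorollary2}. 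The main obstacle I foresee is the interchange of $\liminf_t$ with $\lim_{q\to(m+1)^+}$: pointwise continuity in $q$ is immediate for each $t$, but commuting the two limits would require either showing that the left-hand side is monotone in $q$ on $(m+1,\infty)$, so the limit equals an infimum that the theorem already bounds, or establishing a uniform-in-$t$ modulus of continuity near $q=m+1$ compatible with the growth $\log c_t=e^{o(t)}$.
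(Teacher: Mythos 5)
Your proposal follows the same overall route as the paper: apply Theorem \ref{necessaryconditionfordiscrete} to $S_t$ with $\rho=m$ and push $q\downarrow m+1$ to identify the limiting left-hand side with Gallager's $E_0$, then pass from $P_{Y_{1:t}\vert S_t}$ to $P_{Y_{1:t}\vert X_{1:t}}$. Your Jensen-based derivation of $E_0(m,P_{Y_{1:t}\vert S_t},P_{S_t})\le E_0(m,P_{Y_{1:t}\vert X_{1:t}},P_{X_{1:t}})$ is correct and in fact more self-contained than the paper's, which simply cites the data-processing inequality for R\'enyi divergence; the two are equivalent since $E_0$ is a R\'enyi divergence between the joint and product distributions.

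The obstacle you flag at the end --- interchanging $\liminf_t$ with $q\to(m+1)^+$ --- is a genuine gap, and it is present in the paper's one-line ``take the limit $q\to\rho+1$'' as well. Neither a monotonicity claim in $q$ nor a uniform modulus of continuity is readily available. The clean resolution is to avoid the reverse H\"older step in \eqref{reverseHolderforRenyipower} entirely and establish \eqref{necessaryconditionforGallagerRenyi:proof} directly at $q=\rho+1$. Starting from \eqref{momententropyforoverall}, note that
\begin{align}
\mathbb{E}\left[ P_{S_{t}\vert Y_{1:t}}(S_{t}\vert Y_{1:t})^{-\frac{\rho}{\rho+1}}\mid Y_{1:t}=y\right]
&=\sum_{s}P_{S_t\vert Y_{1:t}}(s\vert y)^{\frac{1}{\rho+1}}\nonumber\\
&\ge e^{\frac{\rho}{\rho+1}H_{\infty}(S_t)}\sum_{s}P_{S_t\vert Y_{1:t}}(s\vert y)^{\frac{1}{\rho+1}}P_{S_t}(s)^{\frac{\rho}{\rho+1}}\nonumber\\
&= e^{\frac{\rho}{\rho+1}H_{\infty}(S_t)}\,\mathbb{E}\left[ e^{-\frac{\rho}{\rho+1}i(S_t;Y_{1:t})}\mid Y_{1:t}=y\right],
\end{align}
using only $P_{S_t}(s)\le e^{-H_\infty(S_t)}$. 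Raising to the $(\rho+1)$-th power, averaging over $Y_{1:t}$, and substituting into \eqref{momententropyforoverall} gives
\begin{equation}
\mathbb{E}\left[\vert E_t\vert^{m}\right]+1\ \ge\ \left(3+2\log(2c_t)\right)^{-\rho}\,e^{\rho H_\infty(S_t)-E_0(\rho,P_{Y_{1:t}\vert S_t},P_{S_t})},
\end{equation}
which with $\rho=m$ and the growth condition \eqref{eq_logc2} yields \eqref{limitingboundforGallager} directly, without any limiting argument in $q$. Combining this with your data-processing step completes the corollary and closes the gap you identified.
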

\begin{proof}
Apply Theorem \ref{necessaryconditionfordiscrete} for $S_{t}$ considering $\rho=m$ and the limit that $q$ reduces to $\rho+1$ yields
\begin{equation}
\label{limitingboundforGallager}
\liminf_{t \rightarrow \infty}\frac{1}{mt}E_{0}(m,P_{Y_{1:t}\vert S_{t}},P_{S_{t}}) \geq\limsup_{t \rightarrow \infty} \frac{1}{t}H_{\infty}(S_{t}).
\end{equation}
Observe that (\ref{limitingboundforGallager}) implies (\ref{necessaryconditionforGallagerRenyicorollary2}) as $E_{0}(m,P_{Y_{1:t}\vert S_{t}},P_{S_{t}})$ is upper bounded by $E_{0}(m,P_{Y_{1:t}\vert X_{1:t}},P_{X_{1:t}})$ due to data-processing inequality for R\'{e}nyi divergence (see  \cite[Theorem 5]{5707067}).
\end{proof}
Corollary \ref{necessaryconditionfordiscretecorollary2} can be related to the $\alpha$-anytime capacity of a channel  (see \cite[Definition 3.2]{1661825}) when we consider the  following communication system. Let $Y_{1:t}$ be the outputs of a channel given by $P_{Y_{1:t}\vert X_{1:t}}(y_{1:t}\vert x_{1:t})$ with  $X_{1:t}$ being inputs that encode a source $\lbrace S_{t} \rbrace$ \footnote{A causal and general communication system as such is given by $P_{Y_{1:t}\vert X_{1:t}}(y_{1:t}\vert x_{1:t})=\prod_{t'=1}^{t}P_{Y_{t'}\vert X_{1:t'}, Y_{1:t'-1}}(y_{t'}\vert x_{1:t'}, y_{1:t'-1})$ and $P_{X_{1:t}\vert S_{1:t}}(x_{1:t}\vert s_{1:t})=\prod_{t'=1}^{t}P_{X_{t'}\vert S_{1:t'}}(x_{t'}\vert s_{1:t'})$ if we describe the encoding in terms of conditional probabilities for ease of description.}. As the outputs of the channel depend on the source process only through the channel inputs, the system follows $S_{t}\rightarrow X_{1:t} \rightarrow Y_{1:t}$. For ease of analysis, we will consider the type of source representing a stream of bits with fixed rate as follows:
\begin{definition}
For $R$ being a positive integer, a discrete-time process  $\lbrace S_{t} \rbrace$ is said to be a rate-$R$ source if it obeys:
\begin{equation}
S_{t+1}=2^{R}S_{t}+W_{t},
\end{equation}
where $\lbrace W_{t}\rbrace$ is an i.i.d. process  such that $W_{t}$ is uniformly chosen from the set $\left\lbrace 0,1,...., 2^{R}-1\right\rbrace $, and $X_{0}=0$.
\end{definition}
Note that a rate-$R$ source satisfies $ \vert S_{t} \vert \leq 2^{Rt}$ almost surely 
and $H_{\infty}(S_{t})=Rt\log(2)$ as it has a uniform distribution for all $t$. Accordingly, we can apply Corollary \ref{necessaryconditionfordiscretecorollary2} to a rate-$R$ source and show 
the following
\begin{corollary}
\label{upperbounforanytimecapacityusingGallagercorollarly}
If $C_{\text{any}}(\alpha)$ is the $\alpha$-anytime capacity of a discrete memoryless channel (DMC) without feedback, $R$ is an positive integer, $m>0$ is an arbitrary positive number, and 
\begin{equation}
\label{anycapacityfortracking}
R\log(2) \leq C_{\text{any}}(mR), 
\end{equation}
then
\begin{equation}
\label{upperbounforanytimecapacityusingGallager}
R\log(2) \leq \frac{E_{0}(m)}{m},
\end{equation}
where $E_{0}(m)=\sup_{P_{X}} E_{0}(m,P_{Y\vert X},P_{X})$ for given  transition probabilities $P_{Y\vert X}$ of the channel.
\end{corollary}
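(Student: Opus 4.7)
The plan is to chain three facts: (i) the hypothesis on anytime capacity makes the rate-$R$ source order-$m$ moment trackable through the channel; (ii) Corollary \ref{necessaryconditionfordiscretecorollary2} then forces a lower bound on the $t$-letter Gallager function in terms of $R\log 2$; and (iii) single-letterization for a memoryless, feedback-free channel collapses this to $E_{0}(m)/m$.

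For step (i), by the definition of $\alpha$-anytime capacity \cite[Def.~3.2]{1661825}, the hypothesis $R\log 2 \leq C_{\text{any}}(mR)$ guarantees an encoder that maps the rate-$R$ source $\{S_t\}$ into channel inputs $\{X_t\}$ and a decoder whose estimate $\hat S_\tau^{(t)}$ of any past symbol satisfies $\Pr(\hat S_\tau^{(t)} \neq S_\tau) \leq K\,2^{-mR(t-\tau)}$. Combining the worst-case bound $|S_t-\hat S_t| \leq \sum_{\tau\leq t} 2^{R(t-\tau)}\mathbbm{1}\{\hat S_\tau^{(t)} \neq S_\tau\}$ with Minkowski's inequality yields $\sup_t \mathbb{E}[|S_t-\hat S_t|^m]<\infty$, so $\{S_t\}$ is order-$m$ moment trackable from $\{Y_t\}$. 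For step (ii), I verify the premises of Corollary \ref{necessaryconditionfordiscretecorollary2}: $S_t\to X_{1:t}\to Y_{1:t}$ is Markov by construction; $|S_t|\leq 2^{Rt}$ so $c_t=2^{Rt}$ satisfies \eqref{eq_logc2} trivially; and because $S_t$ is uniform on $\{0,\dots,2^{Rt}-1\}$ we have $H_\infty(S_t)=Rt\log 2$. The corollary then delivers
\begin{equation*}
\liminf_{t\to\infty} \frac{1}{mt}\,E_{0}\bigl(m,P_{Y_{1:t}\mid X_{1:t}},P_{X_{1:t}}\bigr) \geq R\log 2.
\end{equation*}

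For step (iii), because the channel is a DMC without feedback, $P_{Y_{1:t}\mid X_{1:t}}=\prod_{s=1}^{t}P_{Y\mid X}$. Starting from the representation of $E_0$ in \eqref{Gallagerfuntionwithinformationdensity} and applying Jensen's inequality to the concave map $u\mapsto u^{1/(1+m)}$---equivalently, invoking the classical fact that on a memoryless channel the supremum of $E_0$ over all input distributions is attained by a product law---I obtain the single-letter bound $E_{0}\bigl(m,P_{Y_{1:t}\mid X_{1:t}},P_{X_{1:t}}\bigr)\leq t\,E_{0}(m)$ valid for every (not necessarily product) input distribution $P_{X_{1:t}}$. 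Dividing by $mt$ and substituting into the display above yields $R\log 2 \leq E_{0}(m)/m$, which is \eqref{upperbounforanytimecapacityusingGallager}.

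The most delicate point is the single-letterization in step (iii), because the encoder $S_{1:t}\mapsto X_{1:t}$ is causal but joint, so $P_{X_{1:t}}$ need not factorize. The Jensen step must be applied carefully in the correct direction; if any sign issue arises, the fallback is to compare directly with the optimal product input law and invoke the standard single-letterization of Gallager's $E_0$ for memoryless channels. The remaining steps are essentially bookkeeping---the entropy computation $H_\infty(S_t)=Rt\log 2$ for the uniform rate-$R$ source and the verification that the mild growth condition \eqref{eq_logc2} holds---together with the well-known translation from anytime reliability to $m$-th moment trackability originally established in \cite{Sahai2001AnytimeIT}.
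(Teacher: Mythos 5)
Your three-step outline matches the paper's proof exactly: (i) translate the anytime-capacity hypothesis into order-$m$ moment trackability of the rate-$R$ source, (ii) feed this into Corollary~\ref{necessaryconditionfordiscretecorollary2} to obtain $\liminf_{t\to\infty} \frac{1}{mt} E_0(m, P_{Y_{1:t}\mid X_{1:t}}, P_{X_{1:t}}) \geq R\log 2$, and (iii) single-letterize the $t$-letter Gallager function for a DMC without feedback. The paper disposes of (i) by citing \cite[Theorem 3.3]{1661825} and of (iii) by citing \cite[Theorem 5]{Gallager1965}; you re-derive (i) and offer two routes for (iii), one of which --- the fallback to Gallager's classical single-letterization --- is precisely what the paper uses.

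Two caveats on your re-derivations. In step (i), the Minkowski computation as written does not yield a uniform bound: each term $2^{R(t-\tau)}\Pr(\hat S_\tau^{(t)} \neq S_\tau)^{1/m}$ is at most $K^{1/m}\,2^{R(t-\tau)}\,2^{-R(t-\tau)} = K^{1/m}$, so the sum over $\tau\leq t$ grows like $K^{1/m}\,t$ rather than staying $O(1)$. To get a convergent geometric series the decoding-error decay must strictly beat the source-amplitude growth (anytime exponent strictly larger than $mR$ in the appropriate units); this boundary issue is exactly why the paper simply invokes \cite[Theorem 3.3]{1661825} rather than re-proving the translation. In step (iii), the claim that Jensen applied to $u\mapsto u^{1/(1+m)}$ gives $E_0(m, P_{Y_{1:t}\mid X_{1:t}}, P_{X_{1:t}}) \leq t\,E_0(m)$ for an arbitrary, non-product $P_{X_{1:t}}$ is not a one-line Jensen application: the nontrivial content is that on a memoryless channel the input maximizing $E_0$ is a product distribution, which is precisely Gallager's Theorem~5. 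Your fallback is the correct and necessary move and coincides with the paper's citation, so the overall argument stands, but the ``equivalently'' conflates a genuinely hard single-letterization with a soft convexity step.
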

\begin{proof}
First suppose that (\ref{anycapacityfortracking}) holds which means a rate-$R$ source is order $m$ moment trackable though a DMC with anytime capacity $C_{any}(\alpha)$ (see \cite[Theorem 3.3]{1661825}). On the other hand, if a rate-$R$ source is order $m$ moment trackable through a DMC, the following should also hold:
\begin{equation}
\label{trackabilityboundforGallagergeneral}
\liminf_{t \rightarrow \infty}\frac{1}{mt}E_{0}(m,P_{Y_{1:t}\vert X_{1:t}}, P_{X_{1:t}}) \geq R\log(2),
\end{equation}
which follows from Corollary \ref{necessaryconditionfordiscretecorollary2}. Moreover, this implies
(\ref{upperbounforanytimecapacityusingGallager}) as $E_{0}(m,P_{Y_{1:t}\vert X_{1:t}}, P_{X_{1:t}})\leq tE_{0}(m)$ (see  \cite[Theorem 5]{Gallager1965}) for DMCs without feedback. 
\end{proof}
A result that is similar to Corollary \ref{upperbounforanytimecapacityusingGallagercorollarly} was shown (see \cite[Theorem 3.3.2 ]{Simsek2004AnytimeIT}) for  symmetric DMCs with feedback based on sphere packing exponent. On the other hand, Corollary \ref{upperbounforanytimecapacityusingGallagercorollarly} holds both for asymmetric and symmetric DMCs without feedback.

The second inequality that we provide can be obtained \footnote{See Appendix  \ref{proof:GartnerEllisfortrackinginformationdensity} for the proof.} from (\ref{necessaryconditionforGallagerRenyicorollary2}) while considering a rate-$R$ source for $S_{t}$. Accordingly, when $Y_{1:t}$ are the outputs of a channel with inputs $X_{1:t}$ that encode a rate-$R$ source, the source is order $m$ trackable based on $Y_{1:t}$ only if:  
\begin{equation}
\label{GartnerEllisfortrackinginformationdensity}
\liminf_{t \rightarrow \infty}\frac{1}{\rho t}\log\mathbb{E}\left[ e^{ \rho i(X_{1:t};Y_{1:t})} \right]
\geq R\log(2).
\end{equation}

In fact, the LHS of (\ref{GartnerEllisfortrackinginformationdensity}) is the  Gartner-Ellis limit of  $i(X_{1:t};Y_{1:t})$ which provides another upper bound for anytime capacity if we use (\ref{GartnerEllisfortrackinginformationdensity}) instead of (\ref{trackabilityboundforGallagergeneral}) in the proof of Corollary \ref{upperbounforanytimecapacityusingGallagercorollarly}. Also, observe that both  (\ref{trackabilityboundforGallagergeneral}) and (\ref{GartnerEllisfortrackinginformationdensity}) can be applied for channels other than DMCs without feedback.

\subsection{Sufficient Conditions for Trackability}
Next, we provide two sufficient conditions for order $m$ moment trackability. The first one is based on MAP estimators.
\begin{definition}
An estimator $\hat{X}_{t}^{(\text{MAP})}$ is said to be a maximum a posteriori (MAP) estimator if 
%$\lbrace \hat{X}_{t}^{(\text{MAP})}(Y_{1:t})\rbrace$ that satisfy,
\begin{equation}
\hat{X}_{t}^{(\text{MAP})}=\arg\max_{x \in \mathcal{X}} P_{X_{t}\vert Y_{1:t}}(x\vert Y_{1:t}),
\end{equation}  
with ties in the maximization  broken arbitrarily.
\end{definition} 
We will use the following lemma to derive a sufficient condition for order $m$ moment trackability  based on MAP estimators:
\begin{lemma}
\label{MAPboundforgeneraldistance}
For an integer-valued stochastic process $\lbrace X_{t} \rbrace$, a discrete-valued stochastic process $\lbrace Y_{t} \rbrace $ and  $d(\cdot,\cdot)$ being a distance metric such that $d: \mathbb{Z}\times \mathbb{Z} \rightarrow \mathbb{Z}^{\geq 0}$ we have the following for arbitrary real numbers  $\rho > 0$ and $s >1$:
\begin{align}
\label{upperboundforMAPerrorongeneraldistance}
&\mathbb{E}\left[ d(X_{t},\hat{X}_{t}^{(\text{MAP})}) \right]\leq \zeta(s)\sum_{y_{1:t}}P_{Y_{1:t}}(y_{1:t})\sum_{x} \nonumber\\
& [P_{X_{t}\vert Y_{1:t}}(x\vert y_{1:t})]^{\frac{1}{\rho + 1}}\left[ \sum_{x'}[P_{X_{t}\vert Y_{1:t}}(x'\vert y_{1:t})]^{\frac{1}{\rho + 1}}d(x,x')^{\frac{s}{\rho}}\right]^{\rho},\nonumber\\
\end{align}
\end{lemma}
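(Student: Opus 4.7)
The plan is to split the argument into two steps. First, exploit that $d(X_t,\hat X_t^{(\text{MAP})})$ is a nonnegative integer-valued random variable to reduce its first moment to its $s$-th moment at the cost of the factor $\zeta(s)$. Second, use the defining property of the MAP estimator to bound this $s$-th moment by the inner expression on the right hand side of \eqref{upperboundforMAPerrorongeneraldistance}.

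For the first step, condition on $Y_{1:t}=y_{1:t}$ and let $Z = d(X_t,\hat x)$ with $\hat x = \hat X_t^{(\text{MAP})}(y_{1:t})$. Because $d$ maps into $\mathbb{Z}^{\geq 0}$, $Z$ takes values in $\{0,1,2,\ldots\}$, so the tail-sum identity together with Markov's inequality applied to $Z^s$ gives
\begin{equation*}
\mathbb{E}[Z\vert y_{1:t}]\,=\,\sum_{k=1}^{\infty}\Pr[Z^s\geq k^s\vert y_{1:t}]\,\leq\,\sum_{k=1}^{\infty}\frac{\mathbb{E}[Z^s\vert y_{1:t}]}{k^s}\,=\,\zeta(s)\,\mathbb{E}[Z^s\vert y_{1:t}].
\end{equation*}
This is the only place the hypothesis $s>1$, which ensures $\zeta(s)<\infty$, is used.

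For the second step, I would start from $\mathbb{E}[Z^s\vert y_{1:t}]=\sum_x P_{X_t\vert Y_{1:t}}(x\vert y_{1:t})\,d(x,\hat x)^s$. Splitting the mass as $P_{X_t\vert Y_{1:t}}(x\vert y_{1:t}) = [P_{X_t\vert Y_{1:t}}(x\vert y_{1:t})]^{1/(1+\rho)}[P_{X_t\vert Y_{1:t}}(x\vert y_{1:t})]^{\rho/(1+\rho)}$ and upper bounding the second factor by $[P_{X_t\vert Y_{1:t}}(\hat x\vert y_{1:t})]^{\rho/(1+\rho)}$ (this is where the MAP property is used) yields
\begin{equation*}
P_{X_t\vert Y_{1:t}}(x\vert y_{1:t})\,d(x,\hat x)^s\,\leq\,[P_{X_t\vert Y_{1:t}}(x\vert y_{1:t})]^{\frac{1}{1+\rho}}\Bigl([P_{X_t\vert Y_{1:t}}(\hat x\vert y_{1:t})]^{\frac{1}{1+\rho}}\,d(x,\hat x)^{s/\rho}\Bigr)^{\rho}.
\end{equation*}
The bracket on the right is precisely the $x'=\hat x$ term of the nonnegative sum $\sum_{x'}[P_{X_t\vert Y_{1:t}}(x'\vert y_{1:t})]^{1/(1+\rho)}d(x,x')^{s/\rho}$, so enlarging that single term to the entire sum (and using $\rho>0$ so that $(\cdot)^{\rho}$ is monotone) reproduces the bracketed factor in \eqref{upperboundforMAPerrorongeneraldistance}. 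Summing over $x$, combining with the $\zeta(s)$ factor from the first step, and averaging over $y_{1:t}$ weighted by $P_{Y_{1:t}}$ then completes the proof.

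There is no serious obstacle; the only nontrivial choice is the splitting $p=p^{1/(1+\rho)}\cdot p^{\rho/(1+\rho)}$, which must be calibrated so that after loading all of $d(x,\hat x)^{s/\rho}$ into the factor controlled by the MAP bound and raising to the $\rho$-th power, the resulting exponents line up exactly with those appearing in \eqref{upperboundforMAPerrorongeneraldistance}. The Markov-tail step is what makes the jump from a first moment to an $s$-th moment possible without introducing anything beyond the constant $\zeta(s)$.
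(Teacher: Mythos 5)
Your proof is correct, and it follows a genuinely different route from the paper's. The paper keeps the tail-sum index $r$ alive through a chain of three indicator bounds: it first controls $\Pr\bigl(d(X_t,\hat{X}_t^{(\text{MAP})})\geq r\mid X_t=x,Y_{1:t}=y_{1:t}\bigr)$ by the count of competing $x'$ with $P_{X_t\mid Y_{1:t}}(x\mid y_{1:t})\leq P_{X_t\mid Y_{1:t}}(x'\mid y_{1:t})$ (a union-bound use of the MAP definition), then applies $\min\{1,n\}\leq n^{\rho}$ for nonnegative integers $n$, and finally relaxes the two indicators inside the sum to the ratios $\bigl(d(x,x')/r\bigr)^{s/\rho}$ and $\bigl(P(x'\mid y)/P(x\mid y)\bigr)^{1/(\rho+1)}$; the factor $\zeta(s)$ emerges only at the last moment when $\sum_{r\geq 1} r^{-s}$ is summed. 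You decouple these two mechanisms: Markov's inequality applied to $Z^s$ peels off $\zeta(s)$ immediately and reduces the problem to bounding $\mathbb{E}[Z^s\mid y_{1:t}]$, after which the MAP property is used exactly once, pointwise, via the split $P(x\mid y)=P(x\mid y)^{1/(1+\rho)}P(x\mid y)^{\rho/(1+\rho)}$ and the observation that the $x'=\hat x$ term of a nonnegative sum is dominated by the whole sum. The final expressions agree. Your argument invokes the MAP property in the more direct form $P(x\mid y)\leq P(\hat x\mid y)$ rather than through a union over dominating $x'$, which is why it is shorter; and as a small by-product it shows the $\zeta(s)$ factor is not strictly needed in your route, since for a nonnegative integer-valued $Z$ and $s\geq 1$ one has $Z\leq Z^{s}$ pointwise, making the Markov step avoidable, whereas the paper's route genuinely generates the $r^{-s}$ factor and therefore needs $s>1$ to sum it.
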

\begin{proof}
See Appendix \ref{MAPboundforgeneraldistance:proof}
%See \cite{bacinoglu2020trackability}.
\end{proof}
A sufficient condition for order $m$ moment trackability using Lemma \ref{MAPboundforgeneraldistance} is as follows:
\begin{theorem}
\label{sufficientconditionfordiscrete}
Let 
\begin{align}
&\tau(x,y_{1:t}) \nonumber\\
=&\mathbb{E}\left[ P_{X_{t}\vert Y_{1:t}}(X_{t}\vert Y_{1:t})^{-\frac{m}{m+1}} \vert X_{t}-x \vert^{s} \mid Y_{1:t}=y_{1:t}\right],
\end{align}
where $s > 1$ is an arbitrary real number and $m$ is an integer. Then, the integer-valued stochastic process $\{X_t\}_{t=1,2,\ldots}$ is order $m$ moment trackable using $\lbrace Y_{t} \rbrace_{t=1,2,\ldots}$ if
\begin{equation}
\label{sufficientconditionforRenyi}
\sup_{t>0}\mathbb{E}\left[ \tau(X_{t}, Y_{1:t})^{m}P_{X_{t}\vert Y_{1:t}}(X_{t}\vert Y_{1:t})^{-\frac{m}{m+1}} \right] <\infty.
\end{equation}
\end{theorem}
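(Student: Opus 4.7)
The plan is to exhibit an explicit estimator that achieves bounded $m$-th moment error, namely the MAP estimator $\hat X_t^{(\text{MAP})}$, and bound its error by plugging Lemma~\ref{MAPboundforgeneraldistance} into the right choice of parameters. The definition of $\tau(x,y_{1:t})$ is tailored to make the inner sum in \eqref{upperboundforMAPerrorongeneraldistance} collapse to $\tau$, so the main job is to identify the matching parameters and then reassemble the outer expression as an expectation in the form of the hypothesis \eqref{sufficientconditionforRenyi}.

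Concretely, I would apply Lemma~\ref{MAPboundforgeneraldistance} with the choice of ``distance'' $d(x,x')=|x-x'|^{m}$ and with $\rho=m$. Since $m$ is an integer, $d$ is a non-negative integer-valued function of $(x,x')\in\mathbb{Z}^2$, matching the hypothesis on the codomain of $d$ in Lemma~\ref{MAPboundforgeneraldistance}. With these choices, the inner bracket becomes
\begin{equation*}
\sum_{x'}\bigl[P_{X_{t}\vert Y_{1:t}}(x'\vert y_{1:t})\bigr]^{\frac{1}{m+1}}\,|x-x'|^{s},
\end{equation*}
and, on rewriting $P^{1/(m+1)}=P\cdot P^{-m/(m+1)}$, this is exactly
$\tau(x,y_{1:t})=\mathbb{E}[P_{X_{t}\vert Y_{1:t}}(X_{t}\vert Y_{1:t})^{-m/(m+1)}|X_{t}-x|^{s}\mid Y_{1:t}=y_{1:t}]$.

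Substituting back into \eqref{upperboundforMAPerrorongeneraldistance} and using once more that $P^{1/(m+1)}=P\cdot P^{-m/(m+1)}$ on the outer $x$-sum, I would reorganize the expression as a nested expectation:
\begin{align*}
\mathbb{E}\!\left[|X_t-\hat X_t^{(\text{MAP})}|^{m}\right]
\leq \zeta(s)\,\mathbb{E}\!\left[\tau(X_t,Y_{1:t})^{m}\,P_{X_{t}\vert Y_{1:t}}(X_{t}\vert Y_{1:t})^{-\frac{m}{m+1}}\right].
\end{align*}
Taking the supremum over $t$ and invoking the hypothesis \eqref{sufficientconditionforRenyi} yields $\sup_{t>0}\mathbb{E}[|X_t-\hat X_t^{(\text{MAP})}|^{m}]<\infty$, which is order-$m$ moment trackability by the definition.

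The main obstacle I expect is a bookkeeping one rather than a conceptual one: verifying that Lemma~\ref{MAPboundforgeneraldistance} really is being applied in its stated form, since $d(x,x')=|x-x'|^{m}$ is not a metric when $m>1$ (the triangle inequality fails). I would check that the proof of Lemma~\ref{MAPboundforgeneraldistance} actually uses only that $d$ is a non-negative $\mathbb{Z}$-valued function with $d(x,x)=0$ and $d(x,x')\geq 1$ for $x\neq x'$ (which is what makes the $\zeta(s)$ factor appear via a summability argument $\sum_{k\geq 1}k^{-s}$). Under this standard reading of ``distance,'' the argument above goes through verbatim. Otherwise, one would need to either reprove the lemma for this broader class of $d$, or take $m=1$ and bootstrap via Jensen/Hölder to handle general integer $m$.
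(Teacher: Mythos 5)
Your proposal is correct and matches the paper's proof essentially verbatim: apply Lemma~\ref{MAPboundforgeneraldistance} with $d(x,x')=|x-x'|^{m}$ and $\rho=m$, identify the inner sum as $\tau(x,y_{1:t})$ via $P^{1/(m+1)}=P\cdot P^{-m/(m+1)}$, and reassemble the outer sum into the expectation in \eqref{sufficientconditionforRenyi}. Your scrutiny of the ``distance metric'' hypothesis is also well placed---$|x-x'|^{m}$ fails the triangle inequality for $m>1$, but the proof of Lemma~\ref{MAPboundforgeneraldistance} uses only that $d$ is a non-negative integer-valued function (so that $\mathbb{E}[d]=\sum_{r\geq 1}\Pr(d\geq r)$ holds and the $\zeta(s)$ factor emerges from $\sum_{r\geq 1}r^{-s}$), so no bootstrapping or re-proof is needed.
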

\begin{proof}
Apply Lemma \ref{MAPboundforgeneraldistance} for $d(x,x')=\vert x-x' \vert^{m}$ and $\rho=m$, then observe that 
\begin{eqnarray}
\label{upperboundforMAPerror}
&&\mathbb{E}\left[ \big\lvert X_{t}- \hat{X}_{t}^{(\text{MAP})}\big\rvert^{m} \right] \leq \nonumber\\
&&\zeta(s) \mathbb{E}\left[ \tau(X_{t}, Y_{1:t})^{m}P_{X_{t}\vert Y_{1:t}}(X_{t}\vert Y_{1:t})^{-\frac{m}{m+1}} \right].
\end{eqnarray}
\end{proof}
In addition to MAP estimators, we consider another type of estimators which are defined below:
\begin{definition}
\label{rhoestimators}
For $\rho>0$ being an arbitrary real number, let $\lbrace \hat{X}_{t}^{(\rho)}(Y_{1:t})\rbrace$ be a family of estimators such that $\hat{X}_{t}^{(\rho)}(y_{1:t})$ is  uniformly chosen from the set $\mathcal{A}_{t}(\rho,y_{1:t},J_{t}(\rho,y_{1:t}))$ where
\begin{eqnarray}
&&\mathcal{A}_{t}(\rho,y_{1:t},c)=\nonumber\\
&&\Big\{ x: \frac{P_{X_{t}\vert Y_{1:t}}(x\vert y_{1:t})}{P_{X_{t}\vert Y_{1:t}}(x'\vert y_{1:t})} \geq c \vert x-x'\vert^{\rho} ,\forall x'\Big\}
\end{eqnarray}
and
\begin{equation}
\label{Jdefinition}
J_{t}(\rho,y_{1:t})=\sup\lbrace c \geq 0: \mathcal{A}_{t}(\rho,y_{1:t},c) \neq \emptyset\rbrace.
\end{equation}
\end{definition}
Observe that, as opposed to MAP estimators,  the estimator $\hat{X}_{t}^{(\rho)}$ has a notion of distance and it requires that a possible value to be less likely proportional with its distance to the estimate. This requirement is natural as more likely values cluster around the estimate value.
Accordingly, considering the family of estimators $\lbrace \hat{X}_{t}^{(\rho)}\rbrace$ yields
\begin{theorem}
\label{sufficientconditionforrhoestimators}
If $p>1$  and $s > 1$ are arbitrary real numbers, and $m$ is a positive integer, then, the integer-valued stochastic process $\{X_t\}_{t=1,2,\ldots}$ is  order $m$ moment trackable  based on $\lbrace Y_{t} \rbrace_{t=1,2,\ldots}$ if
\begin{eqnarray}
\label{sufficientconditionforJt}
&&\sup_{t>0}\mathbb{E}\left[ \mathbb{E}\left[ P_{X_{t}\vert Y_{1:t}}(X_{t}\vert Y_{1:t})^{-\frac{m}{m+1}}\mid  Y_{1:t}\right]^{p(m+1)}\right]^{\frac{1}{p}}\nonumber\\
&&\times\mathbb{E}\left[ J_{t}(sm(m+1),Y_{1:t})^{\frac{p}{1-p}}\right]^{\frac{p-1}{p}}< \infty,\nonumber\\
\end{eqnarray}
where $J_{t}(\rho,y_{1:t})$ is as defined in (\ref{Jdefinition}) . 
\end{theorem}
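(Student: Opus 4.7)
The plan is to use the $\rho$-estimator of Definition \ref{rhoestimators} with $\rho = sm(m+1)$, derive a pointwise (in $y_{1:t}$) bound on $\mathbb{E}[|X_t - \hat{X}_t^{(\rho)}|^m \mid Y_{1:t}=y_{1:t}]$ that naturally factors into a reciprocal of $J_t(\rho,y_{1:t})$ times a conditional moment of $P^{-m/(m+1)}$, and finally apply H\"older's inequality to match the product form of (\ref{sufficientconditionforJt}).

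First, for any $\hat{x} \in \mathcal{A}_t(\rho, y_{1:t}, J_t(\rho, y_{1:t}))$, the defining inequality of $\mathcal{A}_t$ rearranges to $P_{X_t\vert Y_{1:t}}(x \vert y_{1:t}) \leq P_{X_t\vert Y_{1:t}}(\hat{x} \vert y_{1:t})/[J_t(\rho, y_{1:t}) |x - \hat{x}|^\rho]$ for every $x \neq \hat{x}$. Multiplying by $|x-\hat{x}|^m$, summing over integer $x \neq \hat{x}$, and using the integer-valuedness of $X_t$ to evaluate $\sum_{k \geq 1} 2k^{m-\rho} = 2\zeta(\rho - m)$, I obtain
\[\sum_{x} P_{X_t\vert Y_{1:t}}(x \vert y_{1:t}) \vert x - \hat{x}\vert^m \leq \frac{2\zeta(\rho - m)\, P_{X_t\vert Y_{1:t}}(\hat{x}\vert y_{1:t})}{J_t(\rho, y_{1:t})}.\]
The series converges because $\rho - m = m(s(m+1)-1) > 1$ whenever $s>1$ and $m \geq 1$.

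Next, I would eliminate $P_{X_t\vert Y_{1:t}}(\hat{x}\vert y_{1:t})$ by the elementary inequality $P_{X_t\vert Y_{1:t}}(\hat{x}\vert y_{1:t}) = [P_{X_t\vert Y_{1:t}}(\hat{x}\vert y_{1:t})^{1/(m+1)}]^{m+1} \leq [\sum_x P_{X_t\vert Y_{1:t}}(x\vert y_{1:t})^{1/(m+1)}]^{m+1}$, and observe that the right-hand side equals $\mathbb{E}[P_{X_t\vert Y_{1:t}}(X_t\vert Y_{1:t})^{-m/(m+1)} \mid Y_{1:t}=y_{1:t}]^{m+1}$. Since this bound is uniform in $\hat{x} \in \mathcal{A}_t$, it also bounds the average over the uniform-random choice of $\hat{X}_t^{(\rho)}$ from $\mathcal{A}_t$, giving a clean conditional bound of the form $\mathbb{E}[|X_t-\hat{X}_t^{(\rho)}|^m \mid Y_{1:t}] \leq 2\zeta(\rho-m) \cdot \mathbb{E}[P_{X_t\vert Y_{1:t}}(X_t\vert Y_{1:t})^{-m/(m+1)}\mid Y_{1:t}]^{m+1}/J_t(\rho, Y_{1:t})$.

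Finally, I would take expectation over $Y_{1:t}$ and apply H\"older's inequality with conjugate exponents $p > 1$ and $p/(p-1)$, separating the conditional-R\'enyi factor from $1/J_t$. Since $-p/(p-1) = p/(1-p)$, the resulting bound is exactly $2\zeta(\rho - m)$ times the quantity appearing in (\ref{sufficientconditionforJt}); the hypothesis therefore yields $\sup_t \mathbb{E}[|X_t - \hat{X}_t^{(\rho)}|^m] < \infty$, establishing order-$m$ moment trackability. The conceptually non-routine step is the pointwise estimate in Step 1, which fuses the likelihood-slope control furnished by $\mathcal{A}_t$ with the integer-valued structure that turns the tail sum into $\zeta(\rho-m)$; once this is in hand, the Arikan-style bound on $P_{X_t\vert Y_{1:t}}(\hat{x}\vert y_{1:t})$ and the single application of H\"older are essentially forced by matching the exponents $p(m+1)$ and $p/(1-p)$ in (\ref{sufficientconditionforJt}).
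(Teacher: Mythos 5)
Your argument is correct, and it takes a genuinely different route than the paper's. The paper's proof mirrors the MAP machinery of Lemma~\ref{MAPboundforgeneraldistance}: it writes $\mathbb{E}[|X_t-\hat X_t^{(\rho)}|^m]=\sum_{r\geq 1}\Pr(|X_t-\hat X_t^{(\rho)}|^m\geq r)$, bounds each tail probability by a cascade of indicator relaxations ($\min\{1,\cdot\}\leq(\cdot)^m$ and $\mathbb{I}_{\{u\geq r\}}\leq (u/r)^s$) with the comparison density $Q(x,x')=J_t|x-x'|^\rho P_{X_t|Y_{1:t}}(x|y_{1:t})$, and only then sums over $r$ to produce a zeta constant before applying H\"older. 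You instead bypass the tail decomposition entirely: for $\hat x\in\mathcal{A}_t(\rho,y_{1:t},J_t)$ you directly invert the defining inequality to get $P_{X_t|Y_{1:t}}(x|y_{1:t})\leq P_{X_t|Y_{1:t}}(\hat x|y_{1:t})/(J_t|x-\hat x|^\rho)$, sum $P(x|y)|x-\hat x|^m$ over integers to pull out $2\zeta(\rho-m)/J_t$, and absorb the residual $P(\hat x|y)$ via $P(\hat x|y)\leq\bigl(\sum_x P(x|y)^{1/(m+1)}\bigr)^{m+1}=\mathbb{E}[P_{X_t|Y_{1:t}}(X_t|Y_{1:t})^{-m/(m+1)}\mid Y_{1:t}=y]^{m+1}$; the final H\"older step is identical. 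Both arrive at the same conditional bound of the form $\mathrm{const}\cdot\mathbb{E}[P^{-m/(m+1)}\mid Y_{1:t}]^{m+1}/J_t$, so the hypothesis \eqref{sufficientconditionforJt} closes the argument in the same way. What your route buys is directness and transparency: you use the defining property of $\mathcal{A}_t$ once, exactly where it matters, rather than routing through the generic indicator-counting template, and your bookkeeping makes the integrability requirement explicit ($\rho-m=m(s(m+1)-1)>1$ for $s>1$, $m\geq1$). The paper's route has the virtue of structural parallelism with Lemma~\ref{MAPboundforgeneraldistance}, reusing a single proof pattern for both estimator families. The resulting zeta constants differ ($2\zeta(\rho-m)$ versus the paper's $\zeta(s)$, which by a careful reading of the exponents should actually be $\zeta(sm)$), but as both are finite constants under the stated hypotheses this is immaterial to the conclusion.
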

\begin{proof}
See Appendix \ref{proof:sufficientconditionforrhoestimators}.
%See \cite{bacinoglu2020trackability}.
\end{proof}
Note that the first term in (\ref{sufficientconditionforJt}) can be expressed in terms of conditional R\'{e}nyi \footnote{We consider the definition of conditional R\'{e}nyi entropy that fits our case.} entropy when $p=1$ while $J_{t}$ function in the second term can be considered as a measure for the shape of the conditional distribution $P_{X_{t}\vert Y_{1:t}}(x\vert Y_{1:t})$.
\section{Conclusion}
We considered necessary and sufficient conditions for tracking a random source. Our results may provide insights to the design of causal information (via real-time coding) for systems that rely on the tracking of random sources.   
%Up to this point, we considered the trackability conditions for integer-valued sources. Now, we show similar results for real-valued sources.

%\bibliography{AgeOfInformation}
\section{Acknowledgements}
The authors would like to thank Serdar Yuksel and Kemal Leblebicioglu for their valuable comments.
\appendix
\subsection{The Proof of Theorem \ref{necessaryconditionfordiscrete}}
\label{necessaryconditionfordiscrete:proof}
Consider arbitrary estimators $\lbrace \hat{X}_{t} \rbrace$ such that $ \vert \hat{X}_{t} \vert \leq c_{t}$  for $t > 0$ \footnote{Clearly, any estimator $ \hat{X}_{t}$ which can take values that are outside of $[-c_t,c_t]$ is suboptimal for minimizing $\mathbb{E}\left[ \vert X_{t}- \hat{X}_{t}\vert^{m} \right]$.} . Let us define estimators $\lbrace \hat{X}_{t}^{(c)} \rbrace$ such that $\hat{X}_{t}^{(c)}= \lceil\hat{X}_{t}\rceil$ where $\lceil\cdot\rceil$ is the ceiling function. If $m \in (1,\infty)$, 
\begin{align}
\label{Minkowskiinequalityforquantization}
\mathbb{E}\left[ \vert X_{t}- \hat{X}_{t}^{(c)}\vert^{m} \right]^{\frac{1}{m}}
\leq \mathbb{E}\left[ \vert X_{t}- \hat{X}_{t}\vert^{m} \right]^{\frac{1}{m}}+1,
\end{align}
where the inequality follows from Minkowski's inequality and that $\mathbb{E}\left[ \vert \hat{X}_{t}-\hat{X}_{t}^{(c)}\vert^{m} \right]\leq 1$. If $m \in (0,1]$, 
\begin{align}
\label{rootinequalityforquantization}
&\mathbb{E}\left[ \vert X_{t}- \hat{X}_{t}^{(c)}\vert^{m} \right]\nonumber\\
\leq &   \mathbb{E}\left[ \left(\vert X_{t}- \hat{X}_{t}\vert+\vert \hat{X}_{t}-\hat{X}_{t}^{(c)}\vert\right)^{m} \right]\nonumber\\ 
\leq & \mathbb{E}\left[\vert X_{t}- \hat{X}_{t}\vert^{m}+\vert \hat{X}_{t}-\hat{X}_{t}^{(c)}\vert^{m} \right]\nonumber\\ 
\leq & \mathbb{E}\left[\vert X_{t}- \hat{X}_{t}\vert^{m}\right]+1, 
\end{align}
where the first inequality is due to triangle inequality, the second inequality follows from the inequality that $(a+b)^{m}\leq a^{m}+b^{m}$ for $a,b \geq 0$ when $m\in (0,1]$, and the third inequality is due to $\mathbb{E}\left[ \vert \hat{X}_{t}-\hat{X}_{t}^{(c)}\vert^{m} \right]\leq 1$.
 Hence, combining (\ref{Minkowskiinequalityforquantization}) and (\ref{rootinequalityforquantization}), we conclude that:
\begin{equation}
\label{trackingintegerforgeneral}
\sup_{t>0}\mathbb{E}\left[ \vert X_{t}- \hat{X}_{t}\vert^{m} \right]< \infty
\end{equation}
 holds only if
\begin{equation}
\label{trackingintegerforintegerestimator}
\sup_{t>0}\mathbb{E}\left[ \vert X_{t}- \hat{X}_{t}^{(c)}\vert^{m} \right]< \infty.
\end{equation}
Accordingly,  (\ref{trackingintegerforintegerestimator})  is a necessary condition to satisfy (\ref{trackingintegerforgeneral}).

Now, we find a necessary condition for (\ref{trackingintegerforintegerestimator}). Let $E_{t}:= X_{t}-\hat{X}_{t}^{(c)}$ be estimation error for estimators $\lbrace \hat{X}_{t}^{(c)} \rbrace$. As $ \vert X_{t} \vert \leq c_{t}$ for $t > 0$  and $\hat{X}_{t}^{(c)}$ is integer-valued, $E_{t}$ is an integer valued random variable taking values in $[-2c_{t},2c_{t}]$.

Using Lemma \ref{momententropyforboundedintegers} for $E_{t}$ being conditioned on $Y_{1:t}$, we have:
\begin{eqnarray}
\label{momententropyforconditionedE}
& &\mathbb{E}\left[ \vert E_{t} \vert^{m} \mid Y_{1:t}=y_{1:t}\right]+1 \geq  
\nonumber\\
& &\mathbb{E}\left[ \vert E_{t} \vert^{\rho} \mid Y_{1:t}=y_{1:t}\right]+1 \geq  \left( 3+2\log(2c_{t})\right)^{-\rho}
\nonumber\\
&\times &\mathbb{E}\left[ P_{E_{t}\vert Y_{1:t}}(E_{t}\vert Y_{1:t})^{-\frac{\rho}{\rho+1}}\mid Y_{1:t}=y_{1:t}\right]^{\rho+1}
\end{eqnarray}
where$P_{E_{t}\vert Y_{1:t}}$ is the conditional distribution for $E_{t}$ conditioned on $Y_{1:t}$ and the first inequality is due to that $E_{t}$ is integer-valued and the second inequality is due to Lemma \ref{momententropyforboundedintegers}.

As $(E_{t},Y_{1:t})\rightarrow (X_{t},Y_{1:t})$ is a bijective transformation when both $X_{t}$ and $\hat{X}_{t}^{(c)}$ are integer-valued, (\ref{momententropyforconditionedE}) becomes:
\begin{eqnarray}
\label{momententropyforconditionedX}
& &\mathbb{E}\left[ \vert E_{t} \vert^{m} \mid Y_{1:t}=y_{1:t}\right]+1 \geq  \left( 3+2\log(2c_{t})\right)^{-\rho}\nonumber\\
&\times &\mathbb{E}\left[ P_{X_{t}\vert Y_{1:t}}(X_{t}\vert Y_{1:t})^{-\frac{\rho}{\rho+1}}\mid Y_{1:t}=y_{1:t}\right]^{\rho+1}.
\end{eqnarray}
Taking expectations over $Y_{1:t}$ on both sides in (\ref{momententropyforconditionedX}) gives:
\begin{eqnarray}
\label{momententropyforoverall}
& &\mathbb{E}\left[ \vert E_{t} \vert^{m} \right]+1 \geq  \left( 3+2\log(2c_{t})\right)^{-\rho}\nonumber\\
&\times &\mathbb{E}\left[ \mathbb{E}\left[ P_{X_{t}\vert Y_{1:t}}(X_{t}\vert Y_{1:t})^{-\frac{\rho}{\rho+1}}\mid Y_{1:t}\right]^{\rho+1}\right] . 
\end{eqnarray}
Now, consider
\begin{eqnarray}
\label{reverseHolderforRenyipower}
&&\mathbb{E}\left[ P_{X_{t}\vert Y_{1:t}}(X_{t}\vert Y_{1:t})^{-\frac{\rho}{\rho+1}}\mid Y_{1:t}\right]^{\rho+1} \nonumber\\
& \geq &\mathbb{E}\left[ e^{-\frac{\rho}{p(\rho+1)}i(X_{t};Y_{1:t})} \mid Y_{1:t}\right]^{p(\rho+1)}\nonumber\\
& &\times\mathbb{E}\left[ P_{X_{t}}(X_{t})^{\frac{\rho}{(p-1)(\rho+1)}}\mid Y_{1:t}\right]^{(1-p)(\rho+1)},
\end{eqnarray}
where the inequality follows from the reverse H\"{o}lder inequality for $p \in (1, \infty)$ and $i(X_{t};Y_{1:t})$ is the information density for $P_{X_{t}Y_{1:t}}$. Then, we can get:
\begin{eqnarray}
\label{reverseHolderforRenyi}
&&\frac{1}{\rho}\log\mathbb{E}\left[\mathbb{E}\left[ P_{X_{t}\vert Y_{1:t}}(X_{t}\vert Y_{1:t})^{-\frac{\rho}{\rho+1}}\mid Y_{1:t}\right]^{\rho+1}\right]\nonumber\\ 
&\geq &  \frac{1}{\rho}\log\mathbb{E}\left[\mathbb{E}\left[ e^{-\frac{\rho}{p(\rho+1)}i(X_{t};Y_{1:t})} \mid Y_{1:t}\right]^{p(\rho+1)}\right]\nonumber\\
&&+H_{\alpha}(X_{t}),
\end{eqnarray}
where $\alpha=(p(\rho+1)-1)/((p-1)(\rho+1))$.

Combining (\ref{momententropyforoverall}) and (\ref{reverseHolderforRenyi}):
\begin{eqnarray}
\label{momententropyforoverallandlogm}
&&\frac{1}{\rho} \log\left( \mathbb{E}\left[ \vert E_{t} \vert^{m} \right]+1\right) \nonumber\\
& \geq &\frac{1}{\rho}\log\mathbb{E}\left[\mathbb{E}\left[ e^{-\frac{\rho}{p(\rho+1)}i(X_{t};Y_{1:t})} \mid Y_{1:t}\right]^{p(\rho+1)}\right]\nonumber\\
 && +  H_{\alpha}(X_{t})- \log \left(  3+2\log(c_{t})\right).
\end{eqnarray}
As $\lim_{t \rightarrow \infty}\log(\log(c_{t}))/t =0$,
\begin{equation}
\label{finitec}
\limsup_{t \rightarrow \infty} \frac{-1}{t}\log \left(  3+2\log(2c_{t})\right)=0.
\end{equation}
Therefore, combining (\ref{momententropyforoverallandlogm}) and (\ref{finitec}) implies that:
\begin{equation}
\label{necessaryconditionforlogepsilon}
\limsup_{t \rightarrow \infty} \frac{1}{\rho t} \log\left( \mathbb{E}\left[ \vert E_{t} \vert^{m} \right]+1\right) < \infty
\end{equation}
holds only \footnote{Here, it is possible that (\ref{necessaryconditionforlogepsilon}) holds when both limits in (\ref{necessaryconditionforGallagerRenyi:proof}) diverge. However, observe that the LHS of (\ref{necessaryconditionforGallagerRenyi:proof}) converges as $i(X_{t};Y_{1:t})$ is uniformly bounded by   $\log(\vert \mathcal{Y}\vert)$ and $\vert \mathcal{Y}\vert$ is finite. } if
\begin{eqnarray}
\label{necessaryconditionforGallagerRenyi:proof}
&&\liminf_{t \rightarrow \infty}-\frac{1}{\rho t}\log\mathbb{E}\left[\mathbb{E}\left[ e^{-\frac{\rho}{p(\rho+1)}i(X_{t};Y_{1:t})} \mid Y_{1:t}\right]^{p(\rho+1)}\right] 
\nonumber\\ 
&\!\! \!\! \geq \!\! \!\!&\limsup_{t \rightarrow \infty} \frac{1}{t}H_{\alpha}(X_{t}).
\end{eqnarray}
In addition, if (\ref{trackingintegerforintegerestimator}) holds then (\ref{necessaryconditionforlogepsilon})  holds. Hence (\ref{necessaryconditionforGallagerRenyi:proof}) is a necessary condition for (\ref{trackingintegerforintegerestimator}). Therefore, (\ref{necessaryconditionforGallagerRenyi:proof}) is a necessary condition for (\ref{trackingintegerforgeneral}), i.e., $\lbrace X_{t}\rbrace$ being order $m$ moment trackable though process $\lbrace Y_{t} \rbrace$. As $p>1$ is arbitrary, $p(\rho+1)$ can be replaced with an arbitrary $q$ such that $q>\rho+1$.   

\subsection{The Proof of Lemma \ref{momententropyforboundedintegers}}
\label{momententropyforboundedintegers:proof}
%The proof of this lemma is almost identical to the proof of Theorem 1 in \cite{Arikan1996} other than the differences due to the definition of $A(x)$ which replaces the guessing function in \cite{Arikan1996}. 
 
%Let $Q_{X}(\cdot)$ be and arbitrary probability distribution on $\lbrace -M_{-}, ...,0,..., M_{+}\rbrace$ and define:

We have two methods to prove Lemma \ref{momententropyforboundedintegers}. The first method follows  the proof  of Theorem 1 in \cite{Arikan1996}, with the guessing function replaced by $A(x)$  defined in \eqref{epsilonA} below and some other necessary changes. In the sequel, we provide a second proof method, which is based on the reverse H\"{o}lder inequality approach used in \cite[Lemma 1]{Arikan1996} and  in \cite[Theorem 2.1]{li2018large}.

Let us define the following function:
\begin{equation}
\label{epsilonA}
A(x) = \begin{cases} \vert x \vert &\mbox{if } x \neq 0,\\
\epsilon & \mbox{if } x = 0, \end{cases}
\end{equation}
where $\epsilon$ is an arbitrary positive real number. Accordingly, observe that
\begin{align}
\label{absbound}
&\mathbb{E}[\vert X \vert^{\rho}]+\epsilon^{\rho} P_{X}(0) \nonumber\\
=& \displaystyle\sum_{x\in \mathcal{X}}P_{X}(x)A(x)^{\rho}
\nonumber\\
\geq&\left[\displaystyle\sum_{x\in \mathcal{X}} P_{X}(x)^{\frac{1}{p}}\right]^{p}\left[\displaystyle\sum_{x\in \mathcal{X}} A(x)^{\frac{-\rho}{p-1}}\right]^{-(p-1)},
\end{align}
where the inequality is due to the reverse H\"{o}lder inequality for $p \in (1,\infty)$.  Considering $p=1+\rho$ in (\ref{absbound}), we get   
\begin{align}
\label{absboundforrho}
&\mathbb{E}[\vert X \vert^{\rho}]+\epsilon^{\rho}  P_{X}(0)  
\nonumber\\
\geq&\left[\displaystyle\sum_{x\in \mathcal{X}} P_{X}(x)^{\frac{1}{1+\rho}}\right]^{1+\rho}\left[\displaystyle\sum_{x\in \mathcal{X}} A(x)^{-1}\right]^{-\rho}
\nonumber\\
\geq&\left[\displaystyle\sum_{x\in \mathcal{X}} P_{X}(x)^{\frac{1}{1+\rho}}\right]^{1+\rho}
\!\!\left[2 + \frac{1}{\epsilon}+\log(M_{-}M_{+})\right]^{-\rho}, 
\end{align}
where the second inequality is due to 
\begin{eqnarray}
\label{Abound}
\displaystyle\sum_{x=-M_{-}}^{M_{+}} A(x)^{-1} &=& \epsilon^{-1} +\displaystyle\sum_{i=1}^{M_{-}} i^{-1}+\displaystyle\sum_{j=1}^{M_{+}} j^{-1}\nonumber\\
&\leq & 2+\epsilon^{-1}  + \log(M_{-}M_{+}).\nonumber
\end{eqnarray}
Letting $\epsilon=1$, combining (\ref{absboundforrho}) with $P_{X}(0) \leq 1$ and 
\begin{align}
e^{\rho H_{\frac{1}{1+\rho}}(X)}=\left[\displaystyle\sum_{x\in \mathcal{X}} P_{X}(x)^{\frac{1}{1+\rho}}\right]^{1+\rho}, 
\end{align}
 we obtain (\ref{absmomentandRenyi}).

\subsection{The Proof of Lemma \ref{momententropyforintegers}}
\label{momententropyforintegers:proof}
The proof is similar to the proof of Lemma \ref{momententropyforboundedintegers} where $A(x)$ is defined as in (\ref{epsilonA}). We have
\begin{eqnarray}
\label{absboundforrhoandm}
&&\mathbb{E}[\vert X \vert^{m}]+1
\nonumber\\
&\geq &\left[\displaystyle\sum_{x=-\infty}^{\infty} P_{X}(x)^{\frac{1}{1+\rho}}\right]^{1+\rho}\left[\displaystyle\sum_{x=-\infty}^{\infty} A(x)^{-\frac{m}{\rho}}\right]^{-\rho}
\nonumber\\
&\geq &\left[\displaystyle\sum_{x=-\infty}^{\infty} P_{X}(x)^{\frac{1}{1+\rho}}\right]^{1+\rho}
\left[1+2\displaystyle\sum_{n=1}^{\infty} \frac{1}{n^{\frac{m}{\rho}}}\right]^{-\rho},
\end{eqnarray}
where the first inequality follows from reverse H\"{o}lder inequality and the second inequality follows from the choice of $\epsilon=1$.

\subsection{The Proof of Lemma \ref{MAPboundforgeneraldistance}}
\label{MAPboundforgeneraldistance:proof}
 We first consider the following:
\begin{equation}
\label{generaldistanceforMAPerror}
\mathbb{E}\left[ d(X_{t},\hat{X}_{t}^{(\text{MAP})} )\right]= \sum_{r=1}^{\infty}\Pr\left( d(X_{t},\hat{X}_{t}^{(\text{MAP})}) \geq r \right).
\end{equation}
Then, observe that:
\begin{eqnarray}
\label{doubleindicatorsforboundingrerror}
&&\Pr\left( d(X_{t},\hat{X}_{t}^{(\text{MAP})})\geq r \mid X_{t}=x, Y_{1:t}=y_{1:t}\right) \nonumber\\
&\leq &  \min \left\lbrace 1, \!\!\displaystyle\sum_{x':d(x,x')\geq r}\!\!\mathbb{I}_{\left\lbrace P_{X_{t}\vert Y_{1:t}}(x\vert y_{1:t}) \leq P_{X_{t}\vert Y_{1:t}}(x'\vert y_{1:t})\right\rbrace }\right\rbrace \nonumber\\
&\leq & \left( \sum_{x'}\mathbb{I}_{\left\lbrace d(x,x')\geq r\right\rbrace}\mathbb{I}_{\left\lbrace P_{X_{t}\vert Y_{1:t}}(x\vert y_{1:t}) \leq P_{X_{t}\vert Y_{1:t}}(x'\vert y_{1:t})\right\rbrace } \right)^{\rho}\nonumber\\
& \leq &\left( \sum_{x'}\left(\frac{d(x,x')}{r}\right)^{\frac{s}{\rho}}\left( \frac{P_{X_{t}\vert Y_{1:t}}(x'\vert y_{1:t})}{P_{X_{t}\vert Y_{1:t}}(x\vert y_{1:t})}\right)^{\frac{1}{\rho + 1}}\right)^{\rho}
\end{eqnarray}
where $\mathbb{I}_{\left\lbrace \cdot \right\rbrace }$ is the indicator function, the first inequality is due to the definition of MAP estimators, the second inequality follows from the inequality $\min\left\lbrace 1, x\right\rbrace \leq x^{\rho}$  for any $\rho \geq 0$ when $x$ is either $0$ or larger than equal to $1$, and the third inequality follows from the inequality $\mathbb{I}_{\left\lbrace x \geq r \right\rbrace } \leq (x/r)^{s}$ for any $x\geq 0,r>1,s \geq 0$.
Combining (\ref{doubleindicatorsforboundingrerror}) and (\ref{generaldistanceforMAPerror}) gives (\ref{upperboundforMAPerrorongeneraldistance}).

%\vspace{-0.2in}

\subsection{The Proof of Theorem \ref{sufficientconditionforrhoestimators}}
\label{proof:sufficientconditionforrhoestimators}
Consider $\hat{X}_{t}^{(\rho)}$ for $\rho=sm(m+1)$, and the following:
\begin{equation}
\label{mmomentforrhoerror}
\mathbb{E}\left[ \vert X_{t}-\hat{X}_{t}^{(\rho)}\vert^{m}\right] = \sum_{r=1}^{\infty}\Pr\left( \vert X_{t}-\hat{X}_{t}^{(\rho)} \vert^{m} \geq r \right).
\end{equation}
Then observe that:
\begin{eqnarray}
\label{doubleindicatorsforboundingrhorerror}
&&\Pr\left( \vert X_{t}-\hat{X}_{t}^{(\rho)} \vert^{m}\geq r \mid X_{t}=x, Y_{1:t}=y_{1:t}\right) \nonumber\\
&\leq &  \min \left\lbrace 1, \!\!\displaystyle\sum_{x':\vert x-x \vert^{m}\geq r}\!\!\mathbb{I}_{\left\lbrace Q(x,x')\leq P_{X_{t}\vert Y_{1:t}}(x'\vert y_{1:t})\right\rbrace }\right\rbrace \nonumber\\
&\leq &  \left( \sum_{x'}\mathbb{I}_{\left\lbrace \vert x-x \vert^{m}\geq r\right\rbrace}\mathbb{I}_{\left\lbrace Q(x,x') \leq P_{X_{t}\vert Y_{1:t}}(x'\vert y_{1:t})\right\rbrace } \right)^{\rho}\nonumber\\
& \leq &\left[ \sum_{x'\neq x}\left(\frac{\vert x-x \vert^{m}}{r}\right)^{s}\left( \frac{P_{X_{t}\vert Y_{1:t}}(x'\vert y_{1:t})}{Q(x,x')}\right)^{\frac{1}{m + 1}}\right]^{m}, \nonumber\\
\end{eqnarray}
where $Q(x,x')=J_{t}(\rho,y_{1:t})\vert x-x' \vert^{\rho} P_{X_{t}\vert Y_{1:t}}(x\vert y_{1:t})$  
, $\mathbb{I}_{\left\lbrace \cdot \right\rbrace }$ is the indicator function, and the inequalities follow similarly as (\ref{doubleindicatorsforboundingrerror}).
Combining (\ref{doubleindicatorsforboundingrhorerror}) and (\ref{mmomentforrhoerror}) gives:
\begin{eqnarray}
\label{mmomentforrhoerrorforJt}
&&\mathbb{E}\left[ \vert X_{t}-\hat{X}_{t}^{(\rho)} \vert^{m}\right]\leq \zeta(s)\times \nonumber\\
&&\mathbb{E}\left[ \left[\frac{1}{J_{t}(\rho,Y_{1:t})}\right]\mathbb{E}\left[ P_{X_{t}\vert Y_{1:t}}(X_{t}\vert Y_{1:t})^{-\frac{m}{m+1}}\mid  Y_{1:t}\right]^{m+1}\right].\nonumber\\
\end{eqnarray}
Applying H\"{o}lder inequality to the RHS of (\ref{mmomentforrhoerrorforJt}) gives the sufficient condition.

\subsection{The Proof of (\ref{GartnerEllisfortrackinginformationdensity})}
\label{proof:GartnerEllisfortrackinginformationdensity}
Consider (\ref{necessaryconditionforGallagerRenyicorollary2})  for a rate-$R$ source for $S_{t}$:
\begin{equation}
\label{rateRforGallager}
\liminf_{t \rightarrow \infty}\frac{1}{\rho t}-\log\mathbb{E}\left[\mathbb{E}\left[  e^{-\frac{\rho}{1+\rho}i(X_{1:t};Y_{1:t})} \Big| Y\right]^{1+\rho}\right] \geq R\log(2).
\end{equation}
Then, we have:
\begin{align}
\label{twiceJensenforGallager}
&\mathbb{E}\left[\mathbb{E}\left[  e^{-\frac{\rho}{1+\rho}i(X_{1:t};Y_{1:t})} \Big| Y\right]^{1+\rho}\right]^{-1}\nonumber\\
\leq & \mathbb{E}\left[\mathbb{E}\left[  e^{\rho i(X_{1:t};Y_{1:t})} \Big| Y\right]^{-1}\right]^{-1}\nonumber\\
\leq & \mathbb{E}\left[\mathbb{E}\left[  e^{\rho i(X_{1:t};Y_{1:t})} \Big| Y\right]\right],
\end{align}
where the inequalities follow from Jensen's inequality. Combining (\ref{twiceJensenforGallager}) with (\ref{rateRforGallager}) gives (\ref{GartnerEllisfortrackinginformationdensity}).

\bibliography{AgeOfInformation}

% Generated by IEEEtran.bst, version: 1.13 (2008/09/30)
\begin{thebibliography}{10}
\providecommand{\url}[1]{#1}
\csname url@samestyle\endcsname
\providecommand{\newblock}{\relax}
\providecommand{\bibinfo}[2]{#2}
\providecommand{\BIBentrySTDinterwordspacing}{\spaceskip=0pt\relax}
\providecommand{\BIBentryALTinterwordstretchfactor}{4}
\providecommand{\BIBentryALTinterwordspacing}{\spaceskip=\fontdimen2\font plus
\BIBentryALTinterwordstretchfactor\fontdimen3\font minus
  \fontdimen4\font\relax}
\providecommand{\BIBforeignlanguage}[2]{{%
\expandafter\ifx\csname l@#1\endcsname\relax
\typeout{** WARNING: IEEEtran.bst: No hyphenation pattern has been}%
\typeout{** loaded for the language `#1'. Using the pattern for}%
\typeout{** the default language instead.}%
\else
\language=\csname l@#1\endcsname
\fi
#2}}
\providecommand{\BIBdecl}{\relax}
\BIBdecl

\bibitem{karatzasshrevefo1991}
I.~Karatzas and S.~E. Shreve, \emph{Brownian Motion and Stochastic Calculus
  (Graduate Texts in Mathematics)}.\hskip 1em plus 0.5em minus 0.4em\relax New
  York, NY, USA: Springer, 1991.

\bibitem{karatzasshrevefo1998}
------, \emph{Methods of Mathematical Finance}.\hskip 1em plus 0.5em minus
  0.4em\relax Springer, 1998, vol.~39.

\bibitem{21283}
G.~J. Foschini and G.~Vannucci, ``Characterizing filtered light waves corrupted
  by phase noise,'' \emph{IEEE Transactions on Information Theory}, vol.~34,
  no.~6, pp. 1437--1448, Nov 1988.

\bibitem{kac1979}
K.~B. Kac, Mark and M.~D. Donsker, \emph{Mark Kac: probability, number theory,
  and statistical physics : selected papers}.\hskip 1em plus 0.5em minus
  0.4em\relax Cambridge, Mass, MIT Press, 1979.

\bibitem{Sahai2001AnytimeIT}
A.~Sahai, ``Anytime information theory,'' in \emph{Ph.D. Dissertation, MIT},
  2001.

\bibitem{7500143}
P.~{Minero} and M.~{Franceschetti}, ``Anytime capacity of a class of markov
  channels,'' \emph{IEEE Transactions on Automatic Control}, vol.~62, no.~3,
  pp. 1356--1367, March 2017.

\bibitem{Simsek2004AnytimeIT}
H.T.Simsek, ``Anytime channel coding with feedback,'' in \emph{Ph.D.
  Dissertation, University of California, Berkeley}, 2004.

\bibitem{BansalBasar1989}
R.~Bansal and M.~Basar, ``Simultaneous design of measurement and control
  strategies for stochastic systems with feedback,'' \emph{Automatica}, no.~5,
  9 1989.

\bibitem{ImerYukselandBasar2006}
O.~Imer, S.~Y{\"u}ksel, and M.~Basar, ``Optimal control of lti systems over
  unreliable communication links,'' \emph{Automatica}, no.~9, 9 2006.

\bibitem{5290272}
A.~Mahajan and D.~Teneketzis, ``Optimal design of sequential real-time
  communication systems,'' \emph{IEEE Transactions on Information Theory},
  vol.~55, no.~11, pp. 5317--5338, Nov 2009.

\bibitem{yuksel10}
S.~Yuksel, ``Stochastic stabilization of noisy linear systems with fixed rate
  limited feedback,'' \emph{IEEE Transactions on Automatic Control}, 2010.

\bibitem{yuksel2012}
S.~{Yuksel}, ``Characterization of information channels for asymptotic mean
  stationarity and stochastic stability of nonstationary/unstable linear
  systems,'' \emph{IEEE Transactions on Information Theory}, vol.~58, no.~10,
  pp. 6332--6354, Oct 2012.

\bibitem{DBLP:conf/cdc/CharalambousKS09}
C.~D. Charalambous, C.~K. Kourtellaris, and P.~Stavrou, ``Stochastic control
  over finite capacity channels: Causality, feedback and uncertainty,'' in
  \emph{Proceedings of the 48th {IEEE} Conference on Decision and Control,
  {CDC}}, 2009, pp. 5889--5894.

\bibitem{8395023}
A.~{Khina}, V.~{Kostina}, A.~{Khisti}, and B.~{Hassibi}, ``Tracking and control
  of gauss-markov processes over packet-drop channels with acknowledgments,''
  \emph{IEEE Transactions on Control of Network Systems}, pp. 1--1, 2018.

\bibitem{8693967}
V.~{Kostina} and B.~{Hassibi}, ``Rate-cost tradeoffs in control,'' \emph{IEEE
  Transactions on Automatic Control}, vol.~64, no.~11, pp. 4525--4540, Nov
  2019.

\bibitem{8437916}
J.~{Zhang} and C.~{Wang}, ``On the rate-cost of gaussian linear control systems
  with random communication delays,'' in \emph{2018 IEEE International
  Symposium on Information Theory (ISIT)}, June 2018, pp. 2441--2445.

\bibitem{stavrou2019sequential}
\BIBentryALTinterwordspacing
P.~A. Stavrou, M.~Skoglund, and T.~Tanaka, ``Sequential source coding for
  stochastic systems subject to finite rate constraints,'' \emph{CoRR}, vol.
  abs/1906.04217, 2019. [Online]. Available:
  \url{http://arxiv.org/abs/1906.04217}
\BIBentrySTDinterwordspacing

\bibitem{8693975}
A.~{Khina}, E.~R. {Gårding}, G.~M. {Pettersson}, V.~{Kostina}, and
  B.~{Hassibi}, ``Control over gaussian channels with and without
  source–channel separation,'' \emph{IEEE Transactions on Automatic Control},
  vol.~64, no.~9, pp. 3690--3705, Sep. 2019.

\bibitem{KawanYuksel2019}
C.~{Kawan} and S.~{Yuksel}, ``Metric and topological entropy bounds for optimal
  coding of stochastic dynamical systems,'' \emph{IEEE Transactions on
  Automatic Control}, pp. 1--1, 2019.

\bibitem{Arikan1996}
E.~{Arikan}, ``An inequality on guessing and its application to sequential
  decoding,'' \emph{IEEE Transactions on Information Theory}, vol.~42, no.~1,
  pp. 99--105, Jan 1996.

\bibitem{renyi1961}
A.~Rényi, ``On measures of entropy and information,'' in \emph{Proceedings of
  the Fourth Berkeley Symposium on Mathematical Statistics and Probability,
  Volume 1: Contributions to the Theory of Statistics}.\hskip 1em plus 0.5em
  minus 0.4em\relax University of California Press, 1961, pp. 547--561.

\bibitem{pinsker1964information}
M.~Pinsker, \emph{Information and information stability of random variables and
  processes}, ser. Holden-Day series in time series analysis.\hskip 1em plus
  0.5em minus 0.4em\relax Holden-Day, 1964.

\bibitem{Gallager1965}
R.~{Gallager}, ``A simple derivation of the coding theorem and some
  applications,'' \emph{IEEE Transactions on Information Theory}, vol.~11,
  no.~1, pp. 3--18, January 1965.

\bibitem{5707067}
Y.~{Polyanskiy} and S.~{Verdú}, ``Arimoto channel coding converse and rényi
  divergence,'' in \emph{2010 48th Annual Allerton Conference on Communication,
  Control, and Computing (Allerton)}, Sep. 2010, pp. 1327--1333.

\bibitem{1661825}
A.~Sahai and S.~Mitter, ``The necessity and sufficiency of anytime capacity for
  stabilization of a linear system over a noisy communication link—part i:
  Scalar systems,'' \emph{IEEE Transactions on Information Theory}, vol.~52,
  no.~8, pp. 3369--3395, Aug 2006.

\bibitem{bacinoglu2020trackability}
\BIBentryALTinterwordspacing
B.~T. Bacinoglu, Y.~Sun, and E.~Uysal, ``On the trackability of stochastic
  processes,'' \emph{CoRR}, vol. abs/2002.08142, 2020. [Online]. Available:
  \url{https://arxiv.org/abs/2002.08142}
\BIBentrySTDinterwordspacing

\bibitem{li2018large}
\BIBentryALTinterwordspacing
J.~Li, ``Large deviations for conditional guesswork,'' \emph{CoRR}, vol.
  abs/1809.10921, 2018. [Online]. Available:
  \url{http://arxiv.org/abs/1809.10921}
\BIBentrySTDinterwordspacing

\end{thebibliography}
\end{document}